\newif\ifsingle
\newtheorem{Thm}{Theorem}
\newtheorem{Prop}[Thm]{Proposition}
\newcommand{\E}{\mathds{E}}
\newcommand{\myVec}[1]{{\bm{#1}}}
\newcommand{\myMat}[1]{{\bm{#1}}}
\newcommand{\mySet}[1]{\mathcal{#1}}
\newcommand{\mvn}[4]{f_{G_{#4}}\big(#1;#2,#3\big)}
\newcommand{\ui}{{\bm{u}}}
\newcommand{\mean}[1]{{\bm{m}_{#1}}}
\newcommand{\var}[1]{{\bm{C}_{#1}}}
\newcommand{\includefig}[1]{\includegraphics[width = 0.5\columnwidth]{#1} 	\vspace{-0.4cm}}
\newcommand{\includefig}[1]{\includegraphics[width = 3 in]{#1} 	\vspace{-0.3cm}}
\acrodef{majorcom}[MAJoRCom]{multi-carrier agile joint radar communication}
\acrodef{caesar}[CAESAR]{carrier agile phased array radar}
\acrodef{csi}[CSI]{channel state information}
\begin{document}

	%
	\title{MAJoRCom: A Dual-Function Radar Communication System Using Index Modulation}
	%
	%
	%
	
	\author{Tianyao Huang, Nir Shlezinger, Xingyu Xu, Yimin Liu, and Yonina C. Eldar
		\thanks{Parts of this work \cite{Huang2019} were accepted for presentation in the 2019 IEEE International Workshop on Signal Processing Advances in Wireless Communications (SPAWC), Cannes, France.
		This work received funding from the National Natural Science Foundation of China under Grants 61571260 and 61801258, from the European Union’s Horizon 2020 research and innovation program under grant No. 646804-ERC-COG-BNYQ, and from the Air Force Office of Scientific Research under grant No. FA9550-18-1-0208}
		\thanks{T. Huang,  X. Xu, and Y. Liu are with the EE Department, Tsinghua University, Beijing, China (e-mail:
			huangtianyao@tsinghua.edu.cn; xy-xu15@mails.tsinghua.edu.cn; yiminliu@tsinghua.edu.cn).}
		\thanks{N. Shlezinger and Y. C. Eldar are with the Faculty of Math and CS, Weizmann Institute of Science, Rehovot,  Israel (e-mail: nirshlezinger1@gmail.com; yonina.eldar@weizmann.ac.il).} 
		\vspace{-1.0cm}
	}

	\maketitle
	\begin{abstract} 
		Dual-function radar communication (DFRC) systems implement both sensing and communication using the same hardware. 
		Such schemes are often more efficient in terms of size, power, and cost, over using distinct radar and communication systems.
		Since these functionalities share resources such as spectrum, power, and antennas, DFRC methods typically entail some degradation in both radar and communication performance. 
		In this work we propose a  DFRC scheme based on the \ac{caesar}, which combines frequency and spatial agility. The proposed DFRC system,  referred to as \ac{majorcom}, exploits the inherent spatial and spectral randomness of \ac{caesar} to convey digital messages in the form of index modulation. The resulting communication scheme naturally coexists with the radar functionality, and thus does not come at the cost of reduced radar performance.  
		We analyze the performance of \ac{majorcom}, quantifying its achievable bit rate. In addition, we develop a low complexity decoder and a codebook design approach, which simplify the recovery of the communicated bits. 
		Our numerical results demonstrate that \ac{majorcom} is capable of achieving a bit rate which is comparable to utilizing independent communication modules without affecting the radar performance, and that our proposed low-complexity decoder allows the receiver to reliably recover the transmitted symbols with an affordable computational burden. 
	\end{abstract}
	

	%
	\IEEEpeerreviewmaketitle
	
	\acresetall
	\vspace{-0.2cm}
	\section{Introduction}
	\vspace{-0.1cm}
	Recent years have witnessed a growing interest in dual-function radar communication (DFRC) systems. 
	Many practical applications, including autonomous vehicles, commercial flight control, and military  radar systems, implement both sensing as well as communications \cite{Hassanien2016, Paul2017, mishra2019towards,ma2019joint}. 
	Jointly implementing radar and communication contributes to reducing the number of antennas \cite{Tavik2005},  system size, weight, and power consumption \cite{Liu2017a}, as well as alleviating concerns for electromagnetic compatibility (EMC) and spectrum congestion issues \cite{Hassanien2016}.
	In one of the most common models for joint radar and communications, the DFRC system acts as the radar transceiver and communications transmitter simultaneously. This setup, which is considered henceforth, is commonly referred to as the {\em monostatic broadcast channel} \cite[Sec. III-C]{Paul2017}.  In such scenarios, radar is  regarded as the primary function and  communications as the secondary one, sharing the high power and large bandwidth of the radar \cite{Zheng2018, Wang2018}. 
	
	Since DRFC systems implement both radar and communications using a single hardware device, these functionalities inherently share some of the system resources, such as spectrum, antennas, and power. To facilitate their coexistence, many different DFRC approaches have been proposed in the literature.  
	In a single antenna radar or traditional phased array radar that transmits a single waveform, a common scheme is to utilize the communication signal as the radar probing waveform \cite{Sturm2011}. 
	Such dual-function waveforms include  phase modulation, as well as orthogonal frequency division multiplexing (OFDM) signaling \cite{Hassanien2016a,Sturm2011}. The design of such waveforms to fit a  given beam pattern was studied in \cite{Liu2017b}. However, this approach tends to come at the cost of reducing radar performance compared to using dedicated radar signals  \cite{Wang2018, Liu2018}. Furthermore, transmitting non-constant modulus communication waveforms may result in low power efficiency when using practical non-linear amplifiers. 
	
	Another common DFRC approach is to utilize different signals for radar and communications, designing the functionalities to co-exist by mitigating their cross interference. Multiple-input multiple-output (MIMO) radar systems in which a subset of the antenna array is allocated to radar and the rest to communications were studied in \cite{Liu2018}, along with the setup in which both functionalities utilize all the antennas. 
	Methods for treating the effect of spectrally interfering separate radar and communication systems were studied in \cite{Mahal2017, Liu2018a}, while \cite{nartasilpa2018communications} analyzed the effect of radar interference on communication systems. Frequency allocation among radar and communications was considered in \cite{bicua2019multicarrier}.
	Coexistence in MIMO DFRC systems can be realized using beamforming, namely, by generating multiple beams with different waveforms towards radar targets and communication users at diverse directions \cite{McCormick2017,Liu2018b}.
	The work \cite{Ma2018} proposed a scheme based on generalized spatial modulation (GSM) \cite{Wang2012}, in which some of the information bits are  conveyed in the selection of the antennas utilized for communication. 
	The drawback of these previous DFRC methods,  particularly when radar is the primary functionality, is that communication interferes with the radar, either via spectral interference, power sharing, or by reducing the number of available antennas, resulting in an inherent tradeoff between radar and communication performance \cite{Chiriyath2017, Qian2018}.
	
	An alternative DFRC strategy is to incorporate communication functionality into existing radar schemes. A common radar technique which can be extended into a DFRC system is  MIMO radar, in which each antenna element transmits a different orthogonal waveform, enhancing the flexibility in transmit beam pattern design \cite{Cohen2018}. The resulting waveform diversity can be exploited to embed information bits into the transmitted signal with minimal effect on the radar performance. For example, the information bits can be conveyed in the sidelobe levels \cite{Hassanien2016b} or via frequency hopping codes  \cite{Hassanien2017}.  The recent work \cite{Wang2018} studied permutation of antenna elements, each transmitting a different predefined orthogonal waveform, as a method for embedding information bits. However, since radar returns of all orthogonal waveforms are received by each antenna element, MIMO radar receivers usually operate at a large bandwidth, resulting in high complexity in hardware and computing. Consequently, these DFRC approaches may be difficult to implement in practice and cannot be applied in many existing radar architectures.

	In our previous work \cite{Huang2019b} we proposed \ac{caesar}, which is a radar scheme capable of approaching wideband performance while utilizing narrowband signals. This improved performance is achieved by combining the concept of frequency agile radar (FAR), in which the carrier frequencies vary from pulse to pulse \cite{Axelsson2007}, with {\em spatial agility}. In particular, \ac{caesar} randomly  chooses multiple frequencies simultaneously in a single pulse, and then selects a set of antennas for each chosen frequency such that each set of antennas uses a different frequency as depicted in Fig. \ref{fig:multi-carrier}. In the reception stage, each array element acquires the radar returns at the same single frequency  as in the transmitting stage, which reduces hardware complexity in comparison with MIMO radar architectures. The resulting radar scheme has excellent electronic counter-countermeasures (ECCM) and EMC performance; it supports spectrum sharing in congested electromagnetic environments; and its radar performance is comparable to that of costly wideband  radar \cite{Huang2019b}. 	 
	In addition to the aforementioned advantages, the inherent spectral and spatial randomness of \ac{caesar} can be utilized to convey information using index modulation methods, in which the indices of the building blocks (e.g., frequencies and/or antennas) are used to convey additional information bits \cite{Basar2016}, 
	{\em without degrading radar performance}. The resulting \ac{majorcom} system is the focus of the current work.

	\begin{figure}
		\centering
		\includegraphics[width=2.5in]{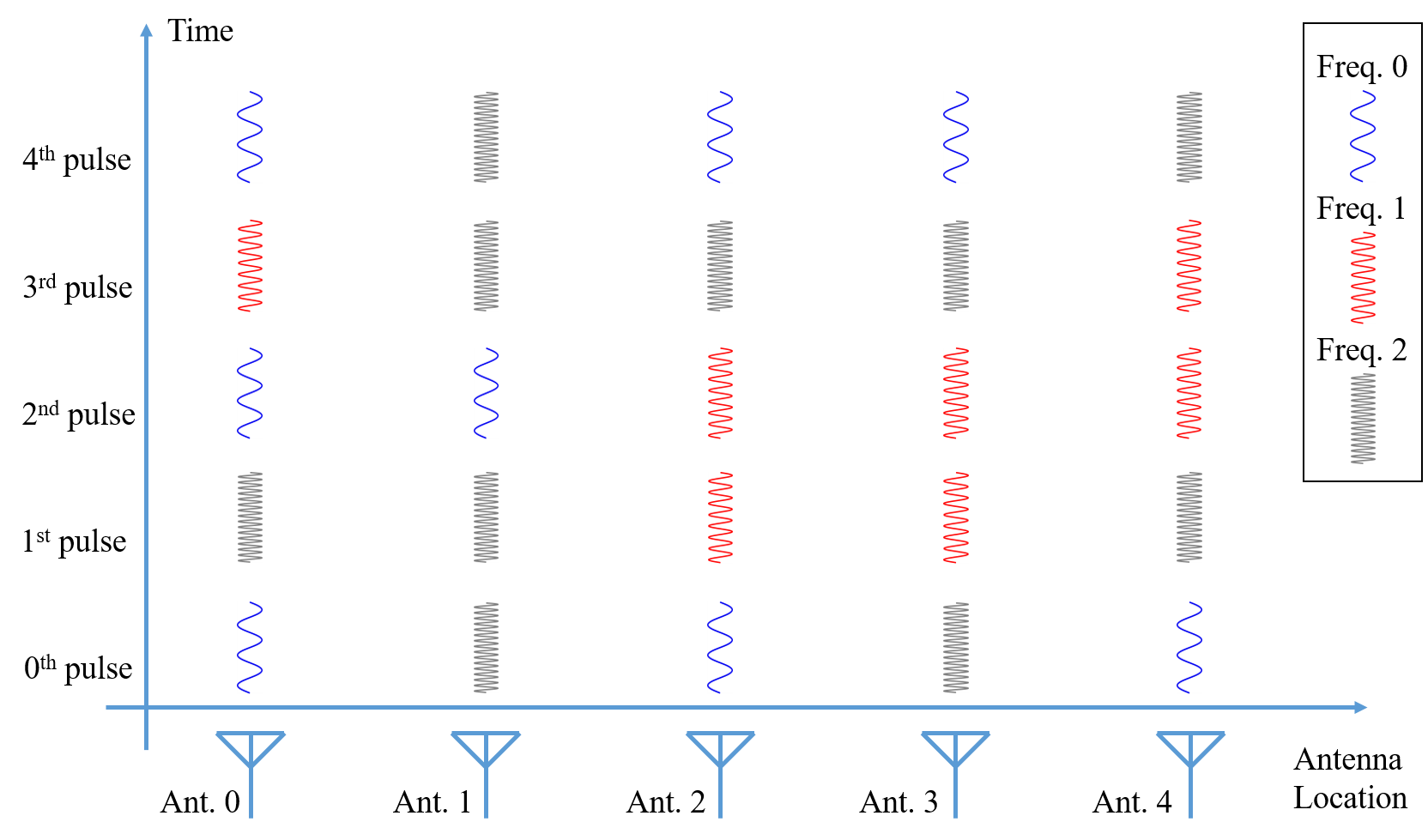}
		\vspace{-0.2cm}
		\caption{Transmission example of \ac{caesar} \cite{Huang2019b}. In every pulse of this example, two out of three carrier frequencies are emitted by different sub-arrays. For example, frequency 0 and 2 are selected in the 0-th pulse and are sent by antennas 0, 2, 4 and antennas 1, 3, respectively.  FAR is a special case of CAESAR, with only one out of three frequencies sent in each pulse.}
		\label{fig:multi-carrier}
		\vspace{-.6cm}
	\end{figure}	
	
	Here, we propose \ac{majorcom}: a DFRC system equipped with a phased array antenna, in which radar is the primary user and is based on \ac{caesar}. We show how \ac{caesar} is capable of conveying information to a remote receiver using index modulation.  
	\ac{majorcom} utilizes the selections of carrier frequencies and their allocation among the antenna elements of \ac{caesar} to convey digital information in a combination of frequency index modulation \cite{Basar2015} and spatial index modulation \cite{Basar2016}. Unlike previously proposed DFRC systems \cite{Hassanien2016a,Sturm2011,Ma2018,Mahal2017, Liu2017b,Liu2018,bicua2019multicarrier}, which use dedicated independent waveforms and/or antennas for communication, in \ac{majorcom} the ability to convey information is an inherent byproduct of the radar scheme. Consequently, communication transmission is naturally obtained from the radar design, and both functionalities coexist without cross interference. 

	We  analyze the communication performance of \ac{majorcom}. Since the communication functionality does not interfere with the radar subsystem, the radar performance of \ac{majorcom} is the same as \ac{caesar}, and was studied in our previous work \cite{Huang2019b}.
	Here, we first detail the scheme for embedding digital communication messages into the radar transmission. 
	We characterize the achievable rate of \ac{majorcom}, and show that the maximal number of bits which can be conveyed in each pulse  grows linearly with the number of transmit antennas and logarithmically with the number of available carrier frequencies.
	%
	%
	To overcome the increased computational complexity associated with index modulation decoding \cite{Wen2017}, we propose a low complexity communication receiver structure and design a permutation codebook to facilitate decoding. 
	\ac{majorcom} is evaluated in a numerical study, demonstrating its capability to achieve comparable communication rates with DFRC systems using antennas that are dedicated for communication only, without affecting the radar performance and resources.
	 
	Our main contributions are summarized as follows:
	\begin{itemize}
		\item We propose \ac{majorcom} which is a  DFRC system that arises from \ac{caesar}. The proposed communication scheme is based on frequency and spatial index modulation, in which selections of frequencies and the corresponding antenna elements are used to embed information,  {without requiring the transmitter to have \ac{csi}.} 
		These communication methods are inherent to the radar scheme, and thus do not affect the power and waveform of the radar functionality.
		\item We analyze the achievable information rate of \ac{majorcom}. In particular, we show that the maximal number of bits which can embedded into each pulse, representing an upper bound on the information rate which is achievable in high signal-to-noise ratio (SNR), grows logarithmically with the number of carrier frequencies. This indicates that increasing the agility of the radar also contributes to its achievable rate. 
		\item We propose a low complexity decoder for the proposed  scheme, which achieves comparable bit error rate (BER) performance as the optimal decoder. Codeword design approaches are also proposed to further facilitate decoding, at the cost of reducing the information rate.
	\end{itemize}
	The main advantage of \ac{majorcom} over previously proposed DFRC systems, e.g., \cite{bicua2019multicarrier,Hassanien2016a,Sturm2011,Ma2018, Liu2017b,Liu2018}, is that it provides the ability to communicate without affecting the radar subsystem, while supporting the usage of simple narrowband transceivers.

	The rest of paper is organized as follows. Section~\ref{sec:system} reviews \ac{caesar} and introduces \ac{majorcom}, which applies frequency selection and spatial permutation to convey digital messages. Section~\ref{sec:comm} is devoted to communication analysis, while Section~\ref{sec:complexity} introduces low-complexity receiver and codebook design methods. Numerical results are provided in Section~\ref{sec:sim}, followed by concluding remarks in Section~\ref{sec:conclusion}.
	 
	Throughout the paper we use the following notation: The sets $\mathbb{C}$, $\mathbb{R}$ and $\mathbb{Z}$ are the complex, real and integer numbers, respectively. 
	We use $| \cdot |$  for the magnitude or cardinality of a scalar value or a set, respectively. We denote by $\lfloor x \rfloor$ the largest  integer less than or equal to $x \in \mathbb{R}$. Uppercase and lowercase boldface letters are used for matrices and vectors, respectively.
	The $m$,$n$-th ($n$-th) element of matrix $\bm A$ (vector $\bm a$) is written as $[{\bm A}]_{m,n}$ ($[{\bm a}]_{n}$). 
	We use $\bm 0/\bm 1_{n \times m}$ to denote a $n \times m$ dimensional matrix with all entries being 0/1.
	The complex conjugate operator, transpose operator, and the complex conjugate-transpose operator are denoted by $(\cdot)^*$, $(\cdot)^T$, and $(\cdot)^H$. 
	We use $\| \cdot \|_p$ as the $\ell_p$ norm of an argument, and ${ \E}[\cdot]$ is the stochastic expectation.
	

	\vspace{-0.2cm}
	\section{\ac{majorcom} System Model}
	\label{sec:system}
	\vspace{-0.1cm}
	In this work, we propose \ac{majorcom}, which jointly implements radar  as well as the ability of communicating information to a remote receiver.  Radar is considered to be the primary user, and is based on the recently proposed \ac{caesar} scheme \cite{Huang2019b}. The communication method is integrated into \ac{caesar} to avoid coexistence issues. In order to formulate \ac{majorcom}, we first review \ac{caesar} in Subsection~\ref{subsec:multi-carrier}, after which we present its extension to a DFRC system in Subsection~\ref{sec:info}.

	\vspace{-0.2cm}
	\subsection{Carrier Agile Phased Array Radar}
	\label{subsec:multi-carrier}
	\vspace{-0.1cm} 
	\ac{caesar} is a recently proposed radar scheme \cite{Huang2019b}  which extends the concept of FAR \cite{Axelsson2007}. This technique was shown to  enhance the ECCM and EMC radar measures as well as achieve improved target reconstruction performance while avoiding costly instantaneous wideband components \cite{Huang2019b}.
	Broadly speaking, \ac{caesar} randomly changes the carrier frequencies from pulse to pulse, maintaining the frequency agility of FAR, while allocating these frequencies among its antenna elements in a random fashion, introducing spatial agility.  
	An illustration of this scheme is depicted in Fig. \ref{fig:multi-carrier}.

	To properly formulate \ac{caesar}, consider a radar system equipped with $L_{\rm R}$ antenna elements,  uniformly spaced with distance $d$ between two adjacent elements.  
	Let  $\mathcal{F}$ be the set containing the available carrier frequencies of cardinality $M$,  given by 
	 \begin{equation}
	 \mathcal{F}:=\{ f_c + m\Delta f | m \in \mySet{M}\},
	 \end{equation}
	 where $\mySet{M} := \{0,1,\dots,M-1\}$, $f_c$ is the initial carrier frequency, and $\Delta f$ is the frequency step.  Let $N$ be the number of radar pulses transmitted in each coherent processing interval, and $f_{n} \in \mathcal{F}$  denote the carrier frequency of the $n$-th pulse. Radar pulses are repeatedly transmitted, starting from time instance $nT_r$ to $nT_r + T_p$, $n\! \in\! \{0,1,\dots,N\!-\!1\} \!:=\! \mySet{N}$, where $T_r$ and $T_p$ are the pulse repetition interval and  duration, respectively,  $T_r > T_p$.

	In the $n$-th pulse, \ac{caesar} randomly selects a set of carrier frequencies $\mySet{F}_n$ from $\mathcal{F}$, $\mySet{F}_n \subset \mathcal{F}$. We assume that the cardinality of $\mySet{F}_n$ is constant, i.e., $|\mySet{F}_n| = K$ for each $n \in  \mySet{N}$, and write the elements of this set as $\mySet{F}_n = \{\Omega_{n,0},\dots,\Omega_{n,K-1} \}$.
	A sub-array is allocated for each frequency, such that all the antenna array elements are utilized for transmission and each element transmits at a single carrier frequency.
	Denote by $f_{n,l} \in \mySet{F}_n$ the frequency used by the $l$-th antenna array element, i.e., if $\Omega_{n,k}$ is the frequency used by the $l$-th element then $f_{n,l} = \Omega_{n,k}$. The waveform sent from the $l$th element for the $n$-th pulse is expressed as $\phi(f_{n,l}, t-nT_r)$, where $\phi(f, t) :=  {\rm rect}\left({t}/{T_p} \right) e^{j2\pi f t}$. 
	In order to direct the antenna beam pointing towards a desired angle $\theta$, the signal transmitted by each antenna is weighted by the function $w_l(\theta,f_{n,l})\in \mathbb{C}$, which is set to \cite{Pillai1989}
	\begin{equation}
	\label{eq:weight}
	w_l(\theta,f_{n,l}) = e^{j2\pi f_{n,l} ld \sin \theta/c},
	\end{equation}
	where $c$ denotes the speed of light. 
	The transmission of the $l$-th array element can thus be written as
	\begin{equation}
	\label{eq:x}
	\left[ \bm x(n,t) \right]_l= w_l(\theta,f_{n,l}) \phi(f_{n,l}, t-nT_r).
	\end{equation}
	
	The vector $\bm x(n,t) \in \mathbb{C}^{L_{\rm R}}$ in \eqref{eq:x} denotes the transmission vector of the full array for the $n$-th pulse at time instance $t$. An illustration of such a transmission is depicted in Fig. \ref{fig:multi-carrier}.   
	The transmitted signal \eqref{eq:x} can also be expressed by grouping the array elements which use the same frequency $\Omega_{n,k}$, $k = 0,\dots,K-1$. Let ${\bm x}_k(n,t)\in \mathbb{C}^{L_{\rm R}}$ represent the portion of ${\bm x}(n,t)$, which utilizes $\Omega_{n,k}$, i.e., $\bm x(n,t) = \sum _{k=0}^{K-1}\bm x_k(n,t)$. The transmitted signal can now  be written as
	\begin{equation}
	\bm x(n,t) 
	: = \sum \limits_{k=0}^{K-1} \bm P(n,k) \bm w\left(\theta,\Omega_{n,k}\right) \phi\left(\Omega_{n,k}, t-nT_r\right),
	\label{eqn:TxSignal}
	\end{equation}
	where $\bm P(n,k) \in \{0,1\}^{L_{\rm R} \times L_{\rm R}}$ is a diagonal selection matrix with diagonal $\bm p(n,k)  \in \{0,1\}^{L_{\rm R}}$, whose  $l$-th entry is 1 if the corresponding array element uses $\Omega_{n,k}$ and 0 otherwise, i.e., $\left[ \bm P(n,k) \right]_{l,l} =\left[ \bm p(n,k) \right]_{l} = 1$ when $\left[ \bm x_k(n,t) \right]_{l} \neq 0$.
	 
	In the reception stage of the $n$-th pulse, i.e., $nT_r + T_p < t< (n+1)T_r$, the $l$-th antenna element only receives radar returns at frequency $f_{n,l}$, and abandons returns at other frequencies, facilitating the usage of narrowband radar receiver and simplifying the hardware requirements. Our proposed extension of \ac{caesar} to a DFRC system, detailed in the following subsection, exploits the transmitted signal model \eqref{eqn:TxSignal}, and does not depend on the observed radar returns and processing strategy.  The readers are referred to \cite{Huang2019b} for a detailed description of the received radar signal model, target recovery methods, and radar performance analysis of \ac{caesar}.

	
	\vspace{-0.2cm}
	\subsection{Information Embedding Scheme}
	\label{sec:info} 
	\vspace{-0.1cm}
	The inherent randomness in the selection of carrier frequencies and their allocation among the transmit antennas can be exploited to convey information in the form of index and permutation modulations. Index modulation refers to the embedding of information bits through indices of certain parameters involved in the transmission \cite{Basar2016}, most commonly the subcarrier index in OFDM modulation, i.e., frequency index modulation \cite{Basar2015}, or the antenna selection in MIMO communications, namely, spatial modulation \cite{Wang2012}. 
	 \ac{caesar} randomly selects an index corresponding to a set of carrier frequencies, and permutes the selected frequencies and the corresponding antenna elements, which can either be treated as an index of a specific permutation, or as a permutation modulation codeword \cite{Slepian1965}. 
	By doing so, \ac{caesar} realizes a DFRC system, as illustrated in Fig. \ref{fig:system} for the  setting of $|\mathcal{F}| = M = 2$.
	 Consequently, a natural extension of \ac{caesar} is to utilize this randomness to convey information to a remote receiver, thus realizing digital communications without affecting the radar functionality. 
	 
 	\begin{figure}
 		\centering
 		\includegraphics[width=3.5in]{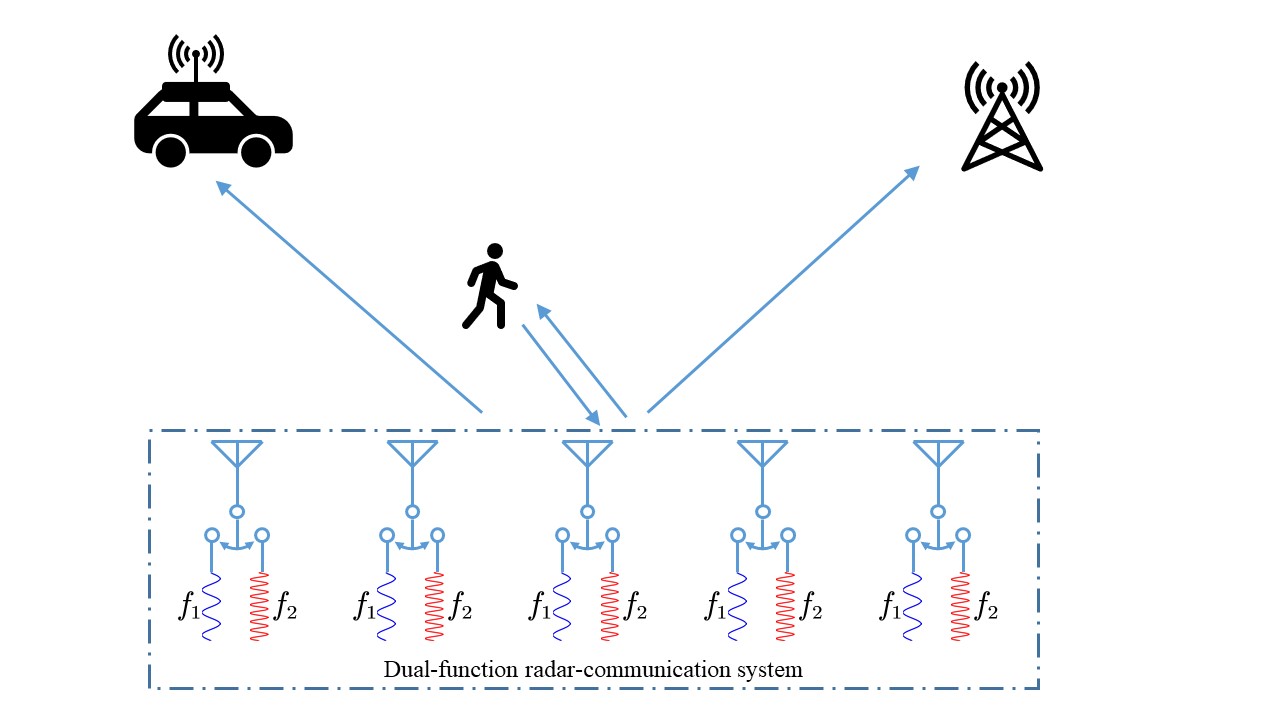}
 		\caption{A phased array DFRC system, which can detect targets (e.g., the pedestrian) and send communication symbols to remote receivers. Each array element independently selects the carrier frequency, e.g. from $f_1$ and $f_2$.}
 		\label{fig:system}
 		\vspace{-0.6cm}
 	\end{figure}
	 
	The proposed information embedding method is applied identically on each pulse, where transmitting more pulses results in more bits being conveyed to the receiver. Consequently, in order to formulate the embedding method, we only consider a single pulse in this section. Accordingly, we simplify our notations as follows: $\mySet{F}: = \mySet{F}_n$, $\bm P_k :=\bm P(n,k)$, $\bm p_k :=\bm p(n,k)$, $\bm x(t) :=\bm x(n,t)$, $\bm x_k(t):=\bm x_k(n,t)$, and $\bm w_k := \bm w\left(\theta,\Omega_{n,k}\right)$. 
	
	Before transmitting the dual function waveform, \ac{caesar} first selects frequencies and then allocates array elements to each frequency. The randomness of digital communication messages is utilized to convey information in the selection of the frequencies subset and in the allocation of the subset among the transmit antennas. We propose to exploit this fact to generate  two sets of codewords, combined into a hybrid modulation strategy, as discussed next.
	
	\subsubsection{Frequency Index Modulation}
	Recall that at each transmission, $K$ out of $M$  frequencies in $\mathcal{F}$ are used. The set of possible frequency selections at each pulse is denoted by
	\begin{equation}
	\mho :=\left\{ \mySet{F}^{(i)}\left| \left|\mySet{F}^{(i)}\right|=K,\mySet{F}^{(i)}\subset \mathcal{F}, i = 0,1,2,\dots \right. \right\},
	\end{equation} 
	where the superscript $(i)$ stands for the $i$-th codeword in the set $\mho$. The number of  possible frequency selections is thus 
	\begin{equation}
	\left| \mho \right| = {{M}\choose{K}} = \frac{M!}{K!(M-K)!}.
	\end{equation}
	
	\subsubsection{Spatial Index Modulation}
	Once the carrier frequencies are selected, each antenna element uses a single frequency  to transmit its monotone waveform. To mathematically formulate this allocation, we define $L_K:=L_{\rm R}/K \geq 1$, which is assumed to be an integer, and allow each frequency to be utilized by exactly $L_K$ antenna elements\footnote{The assumption that $L_R / K$ is an integer is used only to facilitate the formulation of the permutation technique. Clearly, the proposed spatial index modulation can be extended to the case that $L_{\rm R}$ is not an integer multiple of $K$ and that antennas are unevenly allocated by adapting the above arguments.} assigned to the selected $K$ frequencies.  The diagonal selection matrices $\{\bm P_k\}$ uniquely describe the allocation of antenna elements. We note that ${\rm tr}\left( \bm P_k\right) = L_K$, as exactly $L_K$ antennas use the $k$-th frequency, and $\sum _{k=0}^{K-1}\bm P_k\ = \bm I_{L_{\rm R}}$, indicating that all the antenna elements are utilized. Let $\mathcal{P}$ denote the set of all possible allocation patterns, given by   
	\begin{equation}
	\mathcal{P} :=\left\{\left. \bm P^{(i)}_0,\dots, \bm P^{(i)}_{K-1} \right|  i=0,1,\dots \right\},
	\end{equation}
	where the superscript $(i)$ stands for the $i$-th allocation pattern. Note that the number of   patterns is 
	\begin{equation}
	\left| \mathcal{P}\right| = \frac{L_{\rm R}!}{\left(L_K!\right)^{K}}.
	\end{equation}
	 
	As an example, consider a \ac{majorcom} system equipped with $L_{\rm R}=4 $ antennas, transmitting $K=2$ frequencies in each pulse, namely, each frequency is utilized by $L_K = 2$ antennas. In this case, the number of codewords which can be conveyed by this spatial permutation is $\frac{4!}{\left(2!\right)^{2}} = 6$. The first three possible selection patterns are:
	\begin{equation}
	\label{eq:P_example}
	\begin{split}
	\bm p_0^{\left( 0 \right)}=\left[ 1,1,0,0 \right]^T &,
	\bm p_1^{\left( 0 \right)} =\left[ 0,0,1,1 \right]^T ,\\
	\bm p_0^{\left( 1 \right)} =\left[ 1,0,1,0 \right]^T &,
	\bm p_1^{\left( 1 \right)} =\left[ 0,1,0,1 \right]^T ,\\
	\bm p_0^{\left( 2 \right)}=\left[ 1,0,0,1 \right]^T &,
	\bm p_1^{\left( 2 \right)} =\left[ 0,1,1,0 \right]^T.
	\end{split}
	\end{equation}
	The remaining three matrices are obtained by interchanging the subscripts, e.g., by setting $\bm p_0^{\left( 3 \right)} = \bm p_1^{\left( 0 \right)}$, $\bm p_1^{\left( 3 \right)} = \bm p_0^{\left( 0 \right)}$.
	
	\subsubsection{Hybrid modulation}
	Combining frequency and antenna selection yields a hybrid frequency and spatial index modulation scheme, in which  the total number of codewords is
	\begin{equation}
	\left| \mho \right| \left| \mathcal{P}\right| = \frac{M!}{K!(M-K)!} \frac{L_{\rm R}!}{\left(L_K!\right)^{K}}.
	\label{eqn:NumOfBits}
	\end{equation}
	It follows from \eqref{eqn:NumOfBits} that the maximum number of bits which can be conveyed in each pulse is 
	\begin{equation}
	\log_2\left| \mho \right| \!+\! \log_2\left| \mathcal{P}\right|\!\! =\!\! \log_2 \frac{M!}{K!(M-K)!} \!+ \! \log_2 \frac{L_{\rm R}!}{\left(L_K!\right)^{K}}. 
	\label{eq:approximate_bits}
	\end{equation}
	Using Stirling's formula $\log_2 n! \approx n \log_2 n - n \log_2 e$, 
	the number of bits \eqref{eq:approximate_bits} can be approximated as
	\begin{align}
	\log_2\left| \mho \right| \!+\! \log_2\left| \mathcal{P}\right| 
	&\approx \log_2 \left(\frac{M^M}{(M\!-\!K)^{M\!-\!K}K^K}\right) \!+\!L_{\rm R}\log_2K  \notag \\
	&\approx K\log_2 M+L_{\rm R}\log_2K.\label{eq:approximate_bits2}
	\end{align}
	This approximation holds for a large number of antennas $L_{\rm R}$ and a large number of frequencies $M$ such that  $L_{\rm R} \gg K$ and $M \gg K$. It follows from \eqref{eq:approximate_bits2} that the number of bits grows linearly with  $L_{\rm R}$ and logarithmically with $M$, indicating the theoretical benefits of utilizing  \ac{majorcom} with large-scale antenna arrays where $M$ is large.
	 
	The proposed information embedding scheme is carried out as follows: At each pulse, the input bits are divided into two sets. The first set of bits is used for selecting the frequencies $\mySet{F}$ from $\mho$, while the remaining bits determine the pattern of antenna allocation from $\mathcal{P}$. An example of this scheme is depicted in Fig. \ref{fig:PM_signaling}.
	\begin{figure}
		\centering
		\includegraphics[width=2.5in]{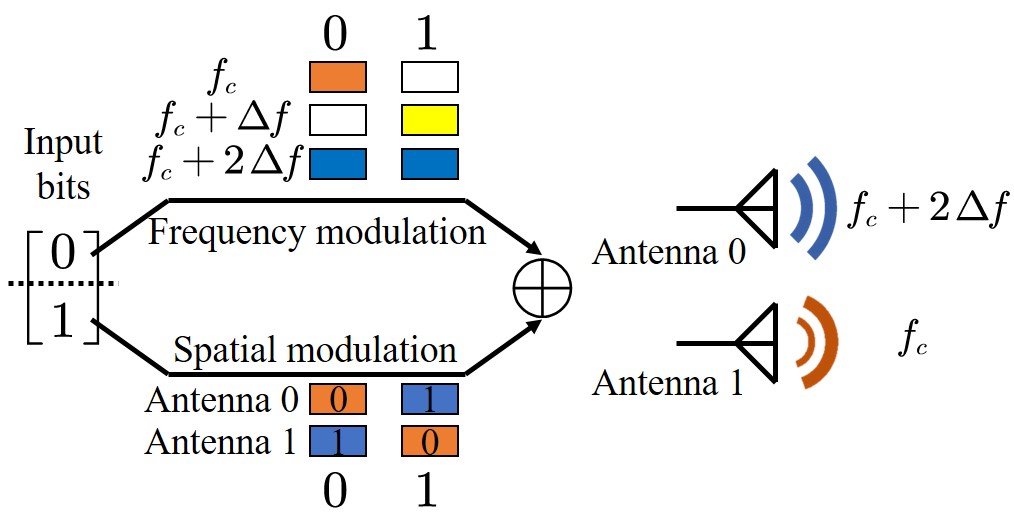}
		\vspace{-0.2cm}
		\caption{Hybrid frequency and spatial index signaling of \ac{majorcom}. }
		\label{fig:PM_signaling}
		\vspace{-0.6cm}
	\end{figure}
	%
	%
	This method bears some similarity to  generalized space-frequency index modulation proposed in \cite{Datta2016}. In particular, both schemes convey information in the selection of the carrier frequencies as well as in the form of the signal transmitted by each  antenna element. Nonetheless, while  \cite{Datta2016} transmits an OFDM signal consisting of multiple subcarriers from a subset of the complete antenna array, \ac{majorcom} utilizes a single carrier frequency at each transmit antenna and transmits a radar waveform using all the available antennas. Consequently, our approach transmits constant modulus monotone signals, and utilizes the complete antenna array,  maximizing the radar power and aperture. For the radar function, the use of complete antenna array is important, because it leads to a more directional beam and higher antenna gain, which is more suitable for target detection, especially in tracking mode \cite{Eli2013}. In contrast,  \cite{Datta2016} embeds information in the selection of active antennas,  
	leading to incomplete antenna aperture and reduction of radar performance.

 \ac{majorcom} does not require the DFRC system to have \ac{csi}, namely, no a-priori knowledge of the channel to the receiver is required in order to embed the information, as opposed to, e.g., spatial beamforming-based DFRC systems \cite{Liu2018b, Sodagari2012AProjection}.  
 Such knowledge is only needed at the receiver to facilitate decoding, as discussed in the following section. 
Furthermore, while we assume that the radar waveform $\phi(f, t) :=  {\rm rect}\left({t}/{T_p} \right) e^{j2\pi f t}$ does not convey informative bits, 
\ac{majorcom} can clearly be extended to embed data into the waveform. For example, by utilizing GSM \cite{Wang2012}, the proposed hybrid frequency and spatial modulation can potentially increase the communication rate.  
However, such a modification would come at the cost of some degradation in radar performance as the radar scheme depends on the waveform and available resources.  We leave this investigation to future work.
	
	
	\vspace{-0.2cm}
	\section{Communication Performance Analysis}
	\label{sec:comm}
	\vspace{-0.1cm}
	We now analyze the communication performance of \ac{majorcom} in terms of achievable rate. To that aim, we first derive the received communication signal model in Subsection~\ref{subsec:comm_signal_model}, and then characterize the achievable rate in Subsection~\ref{subsec:rate}. 
	This analysis allows us to numerically evaluate the communication capabilities of \ac{majorcom} in Section \ref{sec:sim}, where we demonstrate that its achievable rate is comparable to using dedicated communication waveforms, without affecting radar performance.
	
	\vspace{-0.2cm}
	\subsection{Received Communication Signal Model}
	\label{subsec:comm_signal_model}
	\vspace{-0.1cm}
	To model the signal observed by the remote communication receiver, let $L_{\rm C}$ denote the number of receiver antennas, and consider a memoryless additive white Gaussian noise channel. The channel output observed by the receiver, ${\bm y}_{\rm C}(t) \in \mathbb{C}^{L_{\rm C}}$, is given by
	\vspace{-0.1cm}
	\begin{equation}
	\label{equ:yCnt}
	{\bm y}_{\rm C}(t) = \sum \limits_{k=0}^{K-1}\bm H \bm x_k(t)  + \bm n_{\rm C}(t),
	\vspace{-0.1cm}
	\end{equation}
	where  ${\bm n}_{\rm C}(t) \in \mathbb{C}^{L_{\rm C}}$ is the additive Gaussian noise signal and $\bm H \in \mathbb{C}^{L_{\rm C} \times L_{\rm R}}$ is the channel matrix representing the complex-valued fluctuations between the \ac{majorcom} system and the remote receiver. 
	The proposed model can be extended to account for frequency selective channels by using bandlimited waveforms whose bandwidth is no larger than the channel coherence bandwidth. In this case, the  matrix $\bm H $ in \eqref{equ:yCnt} is replaced with the frequency index dependent matrix ${\bm H}_k$. 
	 
	After down-conversion by $e^{-j2\pi f_c t}$,  the receiver samples the signal at time instances $iT_s$, where $T_s$ is the sampling interval, and $i = 0,1,\dots,\lfloor T_p/T_s \rfloor$, resulting in $L_T := \lfloor T_p/T_s \rfloor + 1$ outputs per pulse. We assume that the receiver observes the complete frequency range $\mathcal{F}$, and applies Nyquist sampling rate of the entire bands, $T_s = \frac{1}{M\Delta f}$. We refer to \cite{Xi2014, Ioushua2017} and references therein for sub-sampling approaches. 
	By letting $\bm Y_{\rm C}, \bm N_{\rm C} \in \mathbb{C}^{L_{\rm C} \times L_T}$ denote the sampled channel output and noise corresponding to a single pulse in matrix form, respectively, it follows from the transmit signal model \eqref{eqn:TxSignal} that
	\vspace{-0.1cm}
	\begin{equation}
	\bm Y_{\rm C} =  \sum \limits_{k=0}^{K-1} \bm H \bm P_k \bm w_k {\bm \psi}_{c_{k}}^T + \bm N_{\rm C}.
	\label{eq:YC}
	\vspace{-0.1cm}
	\end{equation}
	In \eqref{eq:YC}, we define $c_k := \left( \Omega_{n,k} -f_c \right)/\Delta f \in \mySet{M}$ as the frequency codeword corresponding to $\Omega_{n,k}$, and ${\bm \psi}_{c_k} := \big[1,e^{j2\pi c_k\Delta f T_s \cdot 1},\dots, e^{j2\pi c_k\Delta f T_s \cdot (L_T-1)}\big]^T\in \mathbb{C}^{L_T}$ as the baseband signal corresponding to the frequency codeword $c_k$.

	We assume that the receiver knows the number of frequencies $K$, the  steering vectors $\{\bm w_k\}$, and has \ac{csi}, i.e., knowledge of the channel matrix $\bm H$, and the distribution of the additive noise. Recall that such \ac{csi} is only required at the receiver side. The fact that for a fixed frequency-antenna allocation, the transmitted waveform is deterministic, can be utilized to facilitate channel acquisition  in a pilot-aided fashion when $\bm H$ has to estimated. We leave the analysis of channel estimation and its effect on the system performance, as well as the design of frequency-antenna allocation pilot sequences for future investigation, and focus here on the case where $\bm H$ is known at the receiver.  Under the above signal model, we next study the achievable rate.
	
	\vspace{-0.2cm}
	\subsection{Achievable Rate Analysis}
	\label{subsec:rate}
	\vspace{-0.1cm}
	In order to evaluate the proposed communication scheme, we characterize its achievable rate, namely, the maximal number of bits which can be reliably conveyed to the receiver at a given noise level in each pulse. To facilitate the analysis, we assume that each discrete-time channel output represents a single pulse, i.e., $L_T = 1$. It is emphasized that the following analysis can also be extended to any positive integer value of $L_T$. Under this model, for each pulse,  the input-output relationship of the communication channel \eqref{eq:YC} is given by
	\begin{equation}
	\label{RxModel1}
	{{\bm{y}}_{\rm C}}= \bm{H}{\bm{x}} + {{\bm{n}}_{\rm C}},
	\end{equation}
	where ${\bm{x}}  = \sum_{k = 0}^{K - 1} {{{\bm{P}}_k}{{\bm{w }}_k}}$, and ${\bm{n}}_{\rm C}$ is additive white Gaussian noise with covariance $\sigma^2 \bm{I}_{L_{\rm C}}$,  independent of $\bm{x}$. 
	Previous works which characterized bounds on the achievable rates of index modulation schemes, e.g., \cite{Wen2016,Wen2017}, assumed that the channel input includes a digitally modulated symbol whose parameters are exploited to convey additional information via index modulation. Here, the primary user is the radar functionality, and the channel input ${\bm{x}}$ in \eqref{RxModel1} is a radar waveform. The information bits are embedded in $\bm x = \sum_{k = 0}^{K - 1} {{{\bm{P}}_k}{{\bm{w }}_k}}$, via the set of carrier frequencies, encapsulated in $\{ {{\bm{w }}_k} \}$, and their antenna allocation, modeled via $\{ {{\bm{P}}_k}\}$. The following achievable rate study is thus specifically tailored for the  statistical characterization of  ${\bm{x}}$ which arises in \ac{majorcom}.
	
	Based on the transmission scheme detailed in Section~\ref{sec:system}, we define a set $\mathcal{X} \subset \mathds{C}^{L_{\rm C}}$ that contains all the possible transmitted signal vectors $\bm x$, whose   carnality   is $|\mathcal{X}| =  \frac{{M!}}{{K!\left( {M - K} \right)!}}\frac{{L_R}!}{{\left( {{L_K}!} \right)}^K}$. 
	Assuming that the codewords are equally distributed, it holds that $\bm{x}$ is uniformly distributed over $\mathcal{X}$. 
	Consequently, the channel output ${{\bm{y}}_{\rm C}}$ obeys a Gaussian mixture (GM) distribution with equal priors. Let $\mvn{\ui}{\mean{}}{\var{}}{L_{\rm C}}$ denote the probability density function (PDF) of an $L_{\rm C}\times 1$ proper-complex Gaussian  vector with mean  $\mean{}\in\mathbb{C}^{L_{\rm C}}$ and covariance matrix $\var{}\in\mathbb{C}^{L_{\rm C}\times L_{\rm C}}$, where $\ui$ is the realization of the random vector. Then, the PDF of  ${\bm{y}}_{\rm C} $ is 
	\begin{equation}
	\label{eqn:GMpdf}
	f_{{\myVec{y}}_{\rm C}}\left( {\ui}\right) = \frac{1}{|\mySet{X}|}\sum_{\myVec{x}^{(i)} \in \mySet{X}} \mvn{\ui}{\myMat{H}\myVec{x}^{(i)} }{\sigma^2 \myMat{I}_{L_{\rm C}}}{L_{\rm C}}.
	\end{equation}
	 
	Using the input-output relationship of the channel, we can characterize the achievable rate. Let  $I(\cdot ; \cdot)$ and $h(\cdot)$ denote the mutual information and differential entropy, respectively. Since the channel in \eqref{RxModel1} is memoryless, its achievable rate is given by the single letter characterization \cite{Cover2006}
	\begin{eqnarray}
	{R_{\rm C}} &=& I\left( {{\myVec{x}};{{\myVec{y}}_{\rm C}}} \right) = h\left( {{{\myVec{y}}_{\rm C}}} \right) - h\left( {\left. {{{\myVec{y}}_{\rm C}}} \right|{\myVec{x}}} \right) \notag\\
	&{=}& h\left( {{{\myVec{y}}_{\rm C}}} \right) - h\left( {{{\myVec{n}}_{\rm C}}} \right) \label{eq:MutualInform1} \\ 
	&{=}& h\left( {{{\myVec{y}}_{\rm C}}} \right) - L_{\rm C} \cdot \log_2 \left(\pi \cdot e \cdot \sigma^2 \right) ,
	\label{eqn:AchievableRate}
	\end{eqnarray}
	where \eqref{eq:MutualInform1} holds since $\myVec{x}$ is independent of $\myVec{n}_{\rm C}$, and \eqref{eqn:AchievableRate} is the differential entropy of proper-complex Gaussian  vectors. 
	 
	In order to evaluate \eqref{eqn:AchievableRate}, one has to compute the differential entropy of the GM random vector ${\myVec{y}}_{\rm C}$. While there is no closed-form analytic expression for the differential entropy of GM random vectors \cite{Huber2008}, a lower bound on the achievable rate can be obtained, as stated in the following proposition:
	\begin{Prop}
		\label{pro:Rate1}
		The achievable rate of the proposed communication scheme is lower bounded by 
		\begin{equation*}
		{R_{\rm C}} \ge - \frac{1}{|\mySet{X}|}\sum_{\myVec{x}^{(i)} \in \mySet{X}} \log_2 f_{{\myVec{y}}_{\rm C}}\left( \myMat{H}{\myVec{x}^{(i)}}\right) - L_{\rm C} \cdot \log_2 \left(\pi \cdot e \cdot \sigma^2 \right),    
		\end{equation*}
		where $f_{{\myVec{y}}_{\rm C}}\left(  \cdot \right) $ is given in \eqref{eqn:GMpdf}.
	\end{Prop}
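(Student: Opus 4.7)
The plan is to derive the bound via Jensen's inequality applied conditional on the transmitted codeword. From \eqref{eqn:AchievableRate} we already have $R_C = h(\bm{y}_C) - L_C \log_2(\pi e \sigma^2)$, so the task reduces to establishing the corresponding lower bound on the differential entropy, namely $h(\bm{y}_C) \geq -\frac{1}{|\mathcal{X}|}\sum_{\bm{x}^{(i)} \in \mathcal{X}} \log_2 f_{\bm{y}_C}(\bm{H}\bm{x}^{(i)})$. I would first write $h(\bm{y}_C) = -\E_{\bm{y}_C}[\log_2 f_{\bm{y}_C}(\bm{y}_C)]$ and split the expectation by conditioning on $\bm{x}$. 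Since the codewords are equiprobable and $\bm{y}_C \mid \bm{x} = \bm{x}^{(i)}$ is $\mathcal{N}(\bm{H}\bm{x}^{(i)}, \sigma^2 \bm{I}_{L_C})$, this gives
\begin{equation*}
h(\bm{y}_C) = -\frac{1}{|\mathcal{X}|}\sum_{\bm{x}^{(i)} \in \mathcal{X}} \E_{\bm{u} \sim \mathcal{N}(\bm{H}\bm{x}^{(i)},\, \sigma^2 \bm{I}_{L_C})}\left[\log_2 f_{\bm{y}_C}(\bm{u})\right].
\end{equation*}

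For each $i$, I would then invoke Jensen's inequality on the convex function $-\log_2 f_{\bm{y}_C}(\cdot)$, using that the conditioning Gaussian has mean $\bm{H}\bm{x}^{(i)}$, to write
\begin{equation*}
\E_{\bm{u}}\left[-\log_2 f_{\bm{y}_C}(\bm{u})\right] \geq -\log_2 f_{\bm{y}_C}\left(\E[\bm{u}]\right) = -\log_2 f_{\bm{y}_C}(\bm{H}\bm{x}^{(i)}).
\end{equation*}
Averaging these pointwise inequalities over $i$ with weight $1/|\mathcal{X}|$ and substituting back into $R_C = h(\bm{y}_C) - L_C \log_2(\pi e \sigma^2)$ then yields the claimed bound directly.

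The delicate step is the Jensen invocation itself: it treats $-\log_2 f_{\bm{y}_C}$ as convex on $\mathbb{C}^{L_C}$, which is equivalent to requiring $f_{\bm{y}_C}$ to be log-concave. Although each Gaussian component $\mvn{\cdot}{\bm{H}\bm{x}^{(j)}}{\sigma^2 \bm{I}_{L_C}}{L_C}$ is individually log-concave, a convex mixture of log-concave densities is generally \emph{not} log-concave---the log of $f_{\bm{y}_C}$ may exhibit saddle points or local minima in the valleys between well-separated means $\{\bm{H}\bm{x}^{(i)}\}$, so Jensen's does not apply off the shelf. I would therefore need either to argue that the pointwise inequality $\E_{\bm{u} \sim g_i}[\log_2 f_{\bm{y}_C}(\bm{u})] \leq \log_2 f_{\bm{y}_C}(\bm{H}\bm{x}^{(i)})$ still holds for the specific GM at hand---for instance by exploiting that the $i$-th component attains its mode exactly at the evaluation point and dominates the density in its neighborhood---or to replace the argument with a related lower bound in the spirit of Huber's entropy bound for Gaussian mixtures, in which case the variance $\sigma^2$ appearing in the proposition would be replaced by $2\sigma^2$. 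Resolving this convexity subtlety is what I expect to be the main obstacle in promoting this sketch to a rigorous proof.
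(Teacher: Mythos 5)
Your diagnosis of where the difficulty lies is exactly right, and you should know that the paper does not resolve it either: its entire proof is a one-line appeal to the Gaussian-mixture entropy lower bound of Huber et al.\ \cite{Huber2008}, i.e.\ precisely your fallback route. That argument shares your first step — write $h(\myVec{y}_{\rm C}) = -\frac{1}{|\mySet{X}|}\sum_{i} \E_{g_i}\left[\log_2 f_{\myVec{y}_{\rm C}}(\myVec{u})\right]$ with $g_i$ the Gaussian component of mean $\myMat{H}\myVec{x}^{(i)}$ and covariance $\sigma^2\myMat{I}_{L_{\rm C}}$ — but then applies Jensen to the \emph{outer} function, $-\E_{g_i}[\log_2 f_{\myVec{y}_{\rm C}}] \ge -\log_2 \E_{g_i}[f_{\myVec{y}_{\rm C}}]$, which needs only convexity of $-\log_2(\cdot)$ and no log-concavity of the mixture. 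The price is the one you anticipated: $\E_{g_i}[f_{\myVec{y}_{\rm C}}]$ is a sum of Gaussian overlap integrals $\int g_i g_j$, each of which equals a Gaussian density with covariance $2\sigma^2\myMat{I}_{L_{\rm C}}$ evaluated at $\myMat{H}\myVec{x}^{(i)}-\myMat{H}\myVec{x}^{(j)}$. Hence the bound that rigorously follows from the cited theorem evaluates a ``widened'' mixture (component covariance $2\sigma^2\myMat{I}_{L_{\rm C}}$) at the points $\myMat{H}\myVec{x}^{(i)}$, not $f_{\myVec{y}_{\rm C}}$ itself; at high SNR the two expressions differ by $L_{\rm C}$ bits. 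So your $\sigma^2$-versus-$2\sigma^2$ caveat is a genuine discrepancy between the proposition as displayed and the theorem invoked to prove it, rather than a defect of your sketch.

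As for your primary route, you are right that it cannot be repaired by convexity alone: $-\log_2 f_{\myVec{y}_{\rm C}}$ is not convex once the means are well separated, and the term-by-term inequality $\E_{g_i}[\log_2 f_{\myVec{y}_{\rm C}}] \le \log_2 f_{\myVec{y}_{\rm C}}(\myMat{H}\myVec{x}^{(i)})$ is not guaranteed for general Gaussian mixtures (a component of small relative weight sitting in a log-convex valley of the mixture is the problematic configuration), so any rescue would have to exploit the average over $i$, or the specific equal-weight, equal-covariance structure here, rather than pointwise domination. The clean way to a fully rigorous statement is to adopt the outer-Jensen/Huber argument and state the bound with the $2\sigma^2$-covariance mixture; that is the argument the paper implicitly relies on, and your proposal correctly identifies both it and the constant it actually delivers.
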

	\begin{proof}
		The proposition follows from lower bounding $h\left( {{{\myVec{y}}_c}} \right) $ using \cite[Thm. 2]{Huber2008}. 
	\end{proof}
	A trivial upper bound on ${R_{\rm C}}$ is obtained by noting that $\myVec{x}$ is uniformly distributed over the discrete set $\mySet{X}$, thus,
	\begin{equation}
	\label{eqn:Rate2}
	{R_{\rm C}} \le  h(\myVec{x}) = \log_2 |\mySet{X}|. 
	\end{equation} 
	This upper bound can be approached at sufficiently high SNRs where the codewords are reliably distinguishable. 
	We note that \eqref{eqn:Rate2} implies that the number of bits which can be conveyed in each pulse cannot be larger than the number of bits needed for representing the different codewords.  The upper bound in \eqref{eqn:Rate2} can be approximated using Stirling's formula via \eqref{eq:approximate_bits2}.
	 
	The achievable rate analysis provides a measure for quantifying the communication capabilities of \ac{majorcom}. In the numerical study in Section \ref{sec:sim} we demonstrate that in low SNRs, \ac{majorcom} is capable of achieving higher rates than using  individual dedicated communication waveforms, without interfering or even affecting the radar performance. Nonetheless, this information-theoretic framework does not account for practical considerations such as computational burden at the receiver, motivating the reduced complexity implementation presented in the following section.
	
	\vspace{-0.2cm}
	\section{Reduced Decoding Complexity Implementation}
	\label{sec:complexity}
		\vspace{-0.1cm}
	As discussed in the introduction, one of the major benefits of \ac{majorcom} stems from its usage of narrowband signals and relatively low computational complexity, which imply that it can be implemented using simple hardware components. However, while generating and transmitting the communication signal by \ac{majorcom} does not require heavy computations, decoding the transmitted index-modulated message by the communication receiver may entail a substantial computational burden.   
	Consequently, in this section we propose methods for reducing the decoding complexity. 
	
	We begin by discussing the optimal maximum likelihood (ML)  symbol decoding scheme in Subsection~\ref{subsec:ML}. Then, we present two approaches for mitigating its complexity: In  \ref{subsec:detection} we propose a sub-optimal decoding method, which affects only the communication receiver. Then, we propose a modified codebook design which facilitates decoding by reducing the number of codewords used by \ac{majorcom} in Subsection~ \ref{subsec:Codebook}. 
	The change of codebook may affect the radar beam pattern, however the simulation results present later in Section \ref{sec:sim} demonstrate that this change has minimum influence on range, Doppler and angle estimates of radar targets. 
	Those two approaches are independent of each other, and can be used either simultaneously or individually, depending on the computational abilities of the communications receiver.

	\vspace{-0.2cm}
	\subsection{Optimal ML Decoder}
	\label{subsec:ML}
	\vspace{-0.1cm}	
	To detect the conveyed symbols, the receiver estimates both the selected frequencies and allocated antenna indices. 	
	Since the entries of the noise matrix $\bm N_{\rm C}$ are i.i.d. Gaussian and the codewords are equiprobable, the detector which minimizes the probability of error is the ML estimator of the frequency indices $\{c_k\}$ and the antenna allocations $\{\bm P_k\}$ \cite[Ch. 5.1]{Goldsmith2005}. From \eqref{eq:YC}, the ML estimator is given by
	\begin{equation}
	\label{eq:ML} 
	\left\{\hat{c}_k,\widehat{\bm P}_{k}\right\}_{k=0}^{K-1}  
	= \mathop{\arg \min} \limits_{\{c_k, \bm P_k\}} \left\| \bm Y_{\rm C}- \sum \limits_{k=0}^{K-1} \bm H \bm P_k \bm w_k {\bm \psi}_{c_{k}}^T\right\|_F^2, 
	\end{equation}
	where $\left\| \cdot \right\|_F$ denotes the Frobenius norm. Since the frequency indices $\{c_k \}$ and the selection matrices $\{\bm P_k \}$ are integers and binary matrices, respectively, the above problem is generally NP-hard. In particular, solving \eqref{eq:ML} involves exhaustively searching over $\mho$ and $\mathcal{P}$,  resulting in high computational complexity.
	 This increased complexity settles with the fact that optimal index modulation decoding is typically computationally complex \cite{Basar2016}. 
	 
	 Various  low complexity methods have been  proposed for different forms of index modulation, see, e.g., \cite[Tbl. 1]{Basar2016}. However, as our form of index modulation, in which all the transmitted information is embedded in the selection of the frequencies and their allocation among antennas (without additional digital modulation signals), is unique, in the next subsection we design a dedicated low-complexity decoder.  

	\vspace{-0.2cm}
	\subsection{Low Complexity Receiver Design}
	\label{subsec:detection}
	\vspace{-0.1cm}	
	Here, we present a sub-optimal detection method. Instead of jointly estimating $\{c_k, \bm P_k\}$ in \eqref{eq:ML}, our proposed strategy operates in an iterative manner: It first initializes the frequency estimates $\{c_k\}$ using sparse recovery, possibly via simple fast Fourier transformation (FFT) followed by thresholding. Then, we iteratively recover the spatial selection matrices $\{\bm P_k\}$, and refine the estimation of $\{c_k\}$ in an alternating fashion. 
	\subsubsection{Frequency Initialization}
	\label{subsubsec:ML_frequency}
	In the first step, we obtain an initial estimation of the transmitted frequencies. To that aim, we rewrite the model \eqref{eq:YC} as
	\begin{equation}
	\label{eq:fft}
	\bm Y_{\rm C}^T = \bm \Psi \bm A + \bm N_{\rm C}^T,
	\end{equation}
	where $\bm \Psi =\left[\bm \psi_0, \bm \psi_1,\dots,\bm \psi_{M-1}\right] \in \mathbb{C}^{L_T \times M}$ contains all $M$ sub-bands, and thus is a-priori known.   
	The matrix $\bm A \in  \mathbb{C}^{M \times L_{\rm C}}$ depends on the frequency indices $\{c_k \}$: When there exists an index $c_k =m$,  the transpose of the $m$-th row of $\bm A$ is given by
	\begin{equation}
	\label{eq:A}
	\left[ \bm A^T \right]_{m} =  \bm H \bm P_k \bm w_k \in \mathbb{C}^{L_{\rm C}},
	\end{equation}
	while otherwise $\left[ \bm A^T \right]_{m} = \bm 0_{L_{\rm C}}$. We regard $\bm A$ as an unknown variable, which has to be estimated. 
	After $\bm A$ is estimated as $\widehat{\bm A}$, the frequency indices $\{c_k\}$ are recovered from the non-zero rows (or the $K$ rows with largest norms) of $\widehat{\bm A}$.
	
	When the number of active frequencies is sufficiently smaller than the number of available frequencies, i.e., $K \ll M$, \eqref{eq:fft} becomes a typical sparse recovery problem, and  $\widehat{\bm A}$ be can be obtained using any sparse recovery method  \cite{Eldar2012}. 
	We note that when the pulse duration is  an integer multiple of $1/\Delta f$, i.e., $T_p = n/\Delta f$, $  n \in \mathbb{Z}^+$, then $L_T = nM$ and $\bm \Psi$ in \eqref{eq:fft} consists of $M$ columns from the $L_T \times L_T$ FFT matrix. In such cases, in which the columns of $\bm \Psi$ are orthogonal (or approximately orthogonal), it is noted that simple projection and thresholding may achieve comparable support recovery performance as more computationally complex iterative sparse recovery methods. In particular, when  the columns of $\bm \Psi$ are orthogonal, projection and thresholding recovers $\widehat{\bm A}$ via   
	\begin{equation}
	\label{eq:AhatFFT}
	    \widehat{\bm A} =  \bm \Psi^H \bm Y_{\rm C}^T,
	\end{equation} 
	which can be computed using FFT. We then sort the norms of rows, $\left\| \left[ \bm A^T \right]_{m} \right\|_2$, in a descending order, and identify the first $K$ rows, 
	which correspond to the frequency indices $\left\{c_k\right\}$. 
	
	The aforementioned simplified scheme is most suitable when $T_p \approx n/\Delta f$, and its main benefit is its low computational complexity. When this approximation does not hold, one can utilize any sparse recovery method for obtaining $\widehat{\bm A}$. 
	\subsubsection{Spatial Decoder}
	\label{subsubsec:ML_spatial}
	After the frequency indices are recovered as $\{\hat{c}_k\}$, the ML estimator (\ref{eq:ML}) becomes
	\begin{equation}
	\label{eq:ML_spatial}
	\left\{\widehat{\bm P}_{k}\right\}_{k=0}^{K-1}  
	= \mathop{\arg \min} \limits_{\{\bm P_k\}} \Big\| \bm Y_{\rm C}- \sum \limits_{k=0}^{K-1} \bm H \bm P_k \bm w_k {\bm \psi}_{\hat{c}_{k}}^T\Big\|_F^2,
	\end{equation}
	which jointly optimizes $K$ selection matrices $\left\{ \bm P_k \right\}_{k=0}^{K-1}$ and can be solved by exhaustive search over $\mathcal{P}$. As directly solving \eqref{eq:ML_spatial} may still be difficult, we next introduce a greedy approach that solves each selection matrix $\bm P_k$ sequentially to reduce the computational burden.
	 
	Denote by $\hat{c}_0,\hat{c}_1,\dots,\hat{c}_{K-1}$ the obtained frequency indices $\left\{ c_k\right\}$ in such an order that the corresponding rows of $\widehat{\bm A}$ satisfy $\left\| \left[ \bm A^T \right]_{\hat{c}_0} \right\|_2 \ge \left\| \left[ \bm A^T \right]_{\hat{c}_1} \right\|_2 \ge \dots \ge \left\| \left[ \bm A^T
	\right]_{\hat{c}_{K-1}} \right\|_2$. 
	According to  (\ref{eq:A}), we write the $\hat{c}_k$-th row of $\bm A$ as
	\begin{equation}
	\label{eq:AHp}
	\left[\widehat{\bm A}^T \right]_{\hat{c}_k}=\widetilde{ \bm H }\bm p_k + \bm n_{k},
	\end{equation}
	where $\widetilde{ \bm H }: = { \bm H }{\rm diag}\left( \bm w_k  \right)$; ${\rm diag}\left( \bm w_k  \right)$ denotes the diagonal matrix with entries defined in $\bm w_k $; $\bm p_k \in \{0,1\}^{L_{\rm R}}$ contains the diagonal entries in $\bm P_k$; and $\bm n_{k}$ denotes the estimate errors in $ \big[ \widehat{\bm A}^T \big]_{c_k}$.  Recall that in each pulse, every antenna is assigned to a single frequency, and thus $\sum_{i=0}^{K-1}{\bm p_i} = \bm 1_{L_{\rm R}}$, implying that $\bm p_k \land \left( \sum_{i=0}^{k-1}{\bm p_i} \right)= \bm 0_{L_{\rm R}}$, where $\land$ denotes entry-wise logical and operation. The fact that the unknown vectors $\{p_k\}$ take binary values and are subject to this joint constraint implies that \eqref{eq:AHp} should not be treated as $K$ individual linear recovery problems, giving rise to the following sequential approach. Here, we assume that $\bm p_0, \bm p_1,\dots, \bm p_{k-1}$ have been recovered prior to $\bm p_k$.
	Then, we use \eqref{eq:AHp} to formulate the recovery of $\bm p_k$ as: 
	\begin{equation}
	\label{eq:leastsquare}
	\begin{split}
	\hat{\bm p}_k = &\mathop{\arg \min}\limits_{\bm p_k} \Big\| \left[\widehat{\bm A}^T \right]_{c_k}  - \widetilde{ \bm H }\bm p_k \Big\|_2^2, \\
	{\rm s.t.\ }& \bm p_k \land \left( \sum_{i=0}^{k-1}{\bm p_i} \right)= \bm 0_{L_{\rm R}},\  \left\|\bm p_k\right\|_1 = L_K.
	\end{split}
	\end{equation}
	Note that \eqref{eq:leastsquare} should be solved using exhaustive search due to its non-conventional constraints and since ${\bm p_k}$ takes binary values.  
	There are $L_{\rm R}-kL_{K} \choose L_{K}$ possible values for each $\bm p_k$, and at most a total of $K{L_{\rm R} \choose L_{K}}$ evaluations should be carried out to recover all vectors $\{\bm p_k\}$. Compared with the optimal ML method  (\ref{eq:ML_spatial}), which requires approximately $\left| \mathcal{P} \right| \approx {L_{\rm R} \choose L_{K}}^K $ searches once the frequency indices $\{c_k\}$ are recovered, the sub-optimal method reduces the complexity significantly. 
	In our numerical analysis in Section \ref{sec:sim} we show that the proposed low-complexity decoder is capable of achieving BER performance which is comparable with the computationally complex ML decoder.
	
    The proposed method obtains a coarse estimate of $\{c_k,\bm P_k\}$, and the corresponding decoder is summarized in Alg. \ref{alg:noniter}. 
	This coarse estimate can be later refined by updating $\{c_k\}$ and $\{\bm P_k\}$, as well as $\widehat{\bm A}$ which is used in estimating both $\{c_k\}$ and $\{\bm P_k\}$, in an alternating manner. The method to update $\{\bm P_k\}$ using an estimate of $\{c_k\}$ and $\widehat{\bm A}$ is based on the above, while the refining of $\{c_k\}$ and $\widehat{\bm A}$ based on  an estimate of $\{\bm P_k\}$  is detailed in the subsequent frequency refinement step.
	
	\begin{algorithm}
	\caption{Non-iterative low complexity decoder}
	\label{alg:noniter}
	{\bf Input:}\ $\bm Y_C$, $\bm \Psi$, $K$.\\
	{\bf Steps:}
    	\begin{algorithmic}
    	    \State (1) Compute $\widehat{\bm A}$ via sparse recovery, possibly using \eqref{eq:AhatFFT} if $T_p \approx n/\Delta f$, and calculate the norms of the rows of  $\widehat{\bm A}$.
    	    \State (2) Sort these norms in a descend order, recovering $\{\hat{c}_k\}$.
    	    \State (3) Apply ML-based spatial decoder via \eqref{eq:ML_spatial}, or perform greedy spatial decoding, i.e., sequentially solve \eqref{eq:leastsquare}.
    	\end{algorithmic}
	{\bf Output:}\ $\{\hat{c}_k,\widehat{\bm P}_k\}_{k = 0}^{K-1}$.
	\end{algorithm}
	
	%
	\subsubsection{Frequency Refinement} 
	With the estimates $\{ \widehat{\bm P}_k \}_{k=0}^{K-1}$, we then refine the frequency codes $\{c_k\}_{k=0}^{K-1}$. According to \eqref{eq:ML}, the optimization problem becomes
	\begin{equation}
	\label{eq:ML_frequency}
	\{ \hat{c}_k\} = \mathop{\arg \min} \limits_{\{ c_k\}} \Big\| \bm Y_{\rm C}- \sum \limits_{k=0}^{K-1}  \widetilde{\bm w}_k {\bm \psi}_{c_{k}}^T\Big\|_F^2,
	\end{equation}
	where $\widetilde{\bm w}_k:= \bm H \widehat{\bm P}_k \bm w_k \in \mathbb{C}^{L_{\rm C}}$. Since the set $\{ c_k\}$ consists of $K$ distinct indices in the range $\{0,1,\ldots,M-1\}$, \eqref{eq:ML_frequency} should be solved via  exhaustive search, requiring a total of ${M \choose K}$ evaluations. Similarly to the aforementioned spatial decoder that utilizes greedy approach, one may also estimate the frequency codes sequentially to reduce the computation, as detailed next.
	
	Particularly, when recovering $c_{k}$, let $\{\hat{c}_{m} \}_{m=0}^{k-1} $ be the previously obtained frequency indices. The estimation of  $c_{k}$ can be formulated by rewriting \eqref{eq:ML_frequency} as
	\begin{equation}
	\label{eq:greedy_frequency}
	\begin{split}
	\hat{ c}_k = \mathop{\arg \min} \limits_{ c_{k}} \Big\| &\bm Y_{\rm C}- \sum \limits_{m=0}^{k-1}  \widetilde{\bm w}_{m} {\bm \psi}_{\hat{c}_{m}}^T
	- \bm H \widehat{\bm P}_{k} {\bm w}_{k}{\bm \psi}_{{c}_{k}}^T
	\Big\|_F^2,\\
	{\rm s.t. }\ & c_{k} \in \{ 0,1,\dots, M-1 \}\backslash \{\hat{c}_{m} \}_{m=0}^{k-1}.
	\end{split}
	\end{equation}	
	In this greedy approach, only a total of $\sum_{k=0}^{K-1}(M-k) = KM - \frac{K(K-1)}{2}$ evaluations are required, which is much less than its exhaustive search counterpart. 
	
	Next, we use the estimates of $\{ \hat{c}_k, \widehat{\bm P}_k \}$ to obtain a refined estimate of $\widehat{\bm A}$, which is used  in the greedy spacial decoder \eqref{eq:leastsquare}. In the initial steps (as indicated in Alg. \ref{alg:noniter}), in which an estimate of $\{ \hat{c}_k, \widehat{\bm P}_k \}$ is not available, $\widehat{\bm A}$ is computed with sparse recovery. Having obtained $\{ \hat{c}_k, \widehat{\bm P}_k \}$, we can refine the value of $\widehat{\bm A}$. In particular, by \eqref{eq:A},  $\widehat{\bm A}$ can now be computed by setting its $\hat{c}_k$-th row to
	\begin{equation}
	\label{eq:Ahat}
	\left[ \widehat{\bm A}^T \right]_{\hat{c}_k} =  \bm H \widehat{\bm P}_k \bm w_k,
	\end{equation}
	while fixing the remaining rows to be the all-zero vector. The resulting algorithm, which uses the Spatial Decoder and Frequency Refinement steps to update $\{ {\bm P}_k \}_{k=0}^{K-1}$ and $\{c_k\}_{k=0}^{K-1}$, respectively, is summarized in Alg. \ref{alg:iter}.
	
	
	\begin{algorithm}
	\caption{Iterative low complexity decoder}
	\label{alg:iter}
	{\bf Input:}\ $\bm Y_C$, $\bm \Psi$, $K$, maximal iteration $i_{\max}$.\\
	{\bf Initialization:}
    	\begin{algorithmic}
    	    \State (1) $i \leftarrow 1$.
    	    \State (2) Obtain  $\{\hat{c}_k^{(0)},\widehat{\bm P}_k^{(0)}\}_{k = 0}^{K-1}$ and $\widehat{\bm A}^{(0)}$  using Alg. \ref{alg:noniter}.
        \end{algorithmic}
	{\bf While} $i < i_{\max}$:
    	\begin{algorithmic} 
    	    \State  (3) Obtain $\{c_k^{(i)}\}_{k = 0}^{K-1}$ via ML-based frequency refinement  \eqref{eq:ML_frequency}, or by sequentially solving \eqref{eq:greedy_frequency}.
    	    \State (4) Evaluate $\widehat{\bm A}^{(i)}$ using \eqref{eq:Ahat}.
    	    \State (5) Compute $\{\widehat{\bm P}_k^{(i)}\}_{k = 0}^{K-1}$  by applying ML-based spatial decoding \eqref{eq:ML_spatial}, or by sequentially solving \eqref{eq:leastsquare}.
    	     \State  (6) $i \leftarrow i+1$.
    	\end{algorithmic}
	{\bf Output:}\ $\{\hat{c}_k^{(i_{\max}-1)},\widehat{\bm P}_k^{(i_{\max}-1)}\}_{k = 0}^{K-1}$.
	\end{algorithm}

	\subsection{Codebook Design}
	\label{subsec:Codebook}
	\vspace{-0.1cm}
	In the description of \ac{majorcom} in Section \ref{sec:system}, all possible options in $\mho$ and $\mathcal{P}$ are coded uniquely and used to carry different symbols. 
	Fully exploiting the variety of these sets allows the achievable rate to approach the upper bound in \eqref{eqn:Rate2} at sufficiently high SNR, as the different codewords can be reliably distinguished from one another. However, the computational complexity required to properly decode the message grows rapidly with the cardinality of these sets. Particularly, while detecting the used frequencies from $\mho$ can be implemented in a low complexity manner at the cost of some performance reduction, recovering the antenna allocation from $\mathcal{P}$ typically requires an exhaustive search,
	as discussed in the previous subsection. Therefore, in order to facilitate accurate decoding under computational complexity constraints, we now propose a codebook design which makes full use of $\mho$ while utilizing a subset of $N_b$ codewords from $\mathcal{P}$, thus balancing achievable rate and computational burden at the receiver.  
	  
	  
	Our goal is to design a constellation set, which is a subset of $\mathcal{P}$, such that the ability of the receiver to distinguish between different codewords is imporved. To that aim, we first discuss the design criterion, which yields a high dimensional NP-hard max-min problem. To solve it, we first apply a dimension reduction approach, after which we propose a sub-optimal solution. 
	 
	\subsubsection{Design Criterion}
	\label{subsubsec:DesignCriterion}
	When the impact on radar function is not accounted for, the proper codebook design objective is to maximize the minimum distance between any two codewords, $\{\bm P_k^{(i)}\}_{k=0}^{K-1} $ and $\{\bm P_k^{(j)}\}_{k=0}^{K-1} $, or equivalently,  $\{\bm p_k^{(i)}\}_{k=0}^{K-1} $ and $\{\bm p_k^{(j)}\}_{k=0}^{K-1} $. In particular, it follows from (\ref{eq:ML_spatial}) that the distance,
	\begin{equation}
	\bigg\| \sum \limits_{l=0}^{K-1} \bm H \bm P_l^{(i)} \bm w_l {\bm \psi}_{c_{l}^{(i)}}^T- \sum \limits_{k=0}^{K-1} \bm H \bm P_k^{(j)} \bm w_k {\bm \psi}_{c_{k}^{(j)}}^T\bigg\|_F^2,
	\end{equation}
	dominates the error probability between the $i$-th and $j$-th symbols. Since we optimize the minimum distance with respect to the antenna allocations $\{\bm p_k\}$, we henceforth focus on the setting where the set of frequency indices are the same in those symbols, i.e., $\big\{ c_k^{(i)} \big\}_{k=0}^{K-1}$ equals $\big\{ c_k^{(j)} \big\}_{k=0}^{K-1}$. This setting generally 
	leads to a smaller distance in comparison with the unequal case, and can thus be considered as a worst case scenario. 
	
	When the frequency modulations are orthogonal, i.e., ${\bm \psi}_{m_1}^H{\bm \psi}_{m_2} = 0$, $m_1\neq m_2$, $m_1,m_2 \in \mySet{M}$, which holds when $T_p$ is an integer multiple of $1/\Delta f$, the distance between two codewords can be simplified to
	\begin{equation}
	\label{eq:dist_H}
	{\text{H-Dist}}_{i,j} := \sum \limits_{k=0}^{K-1}\left\| \widetilde{ \bm H }\bm p_k^{(i)} - \widetilde{ \bm H }\bm p_k^{(j)} \right\|_2^2.
	\end{equation} 
	The distance \eqref{eq:dist_H} is upper bounded by the largest eigenvalue of $ \widetilde{ \bm H }$ times $\text{Dist}_{i,j}$, which is defined as
	\begin{equation}
	\label{eq:dist_vec}
	{\text{Dist}}_{i,j} := \sum \limits_{k=0}^{K-1}\left\| \bm p_k^{(i)} - \bm p_k^{(j)} \right\|_2^2.
	\end{equation} 
	We propose a codebook design to find a subset $\mathcal{P}^{\#} \subset \mathcal{P}$ of cardinality $N_b$ that maximizes the distance
	\begin{equation}
	\label{eq:maxmindistance}
	\max_{\mathcal{P}^{\#} \subset \mathcal{P}} \min_{i,j \in \mathcal{P}^{\#}, i\neq j} {\text{Dist}}_{i,j},\quad {\rm s.t.}\ \left| \mathcal{P}^{\#} \right| = N_b.
	\end{equation}
	We note that \eqref{eq:maxmindistance} is still NP-hard to solve. Although the objective in \eqref{eq:maxmindistance} can be considered as the minimal Hamming distance, standard codebook desgins based on this criterion, see \cite[Ch. 8]{Goldsmith2005}, cannot be used here. The reason is that our codewords are subject to the additional unique constraint $\sum_{k=0}^{K-1}\bm p_k^{(i)}  = \bm 1_{L_{\rm R}}$, which does not appear in standard binary codebooks. Thus, we propose a codebook design based on projection into a lower dimensional plane, described next.
	 
	\subsubsection{Dimension Reduction of the Constellation Set}
	\label{subsubsec:DimensionReduction}
	We propose to project the original codewords into a real-valued $L_{\rm D}$-dimensional plane, i.e., $ \{\bm p_k^{(i)} \} \mapsto \widetilde{\bm p}^{(i)} \in \mathbb{R}^{L_{\rm D}}$, such that the distances between codewords are maintained
	\begin{equation}
	\label{eq:distance_transform}
	\widetilde{d}\left<i,j\right> := \left\| \widetilde{\bm p}^{(i)} - \widetilde{\bm p}^{(j)} \right\|_2^2 = {\text{ Dist}}_{i,j}, 
	\end{equation}
	where $ i,j = 0,1,\dots,\left|\mathcal{P} \right|-1$. It is easy to verify that \eqref{eq:distance_transform} holds when there exist an orthogonal matrix $\bm U \in \mathbb{R}^{KL_{\rm R} \times KL_{\rm R}}$ and a constant vector $\bm a \in \mathbb{R}^{KL_{\rm R} }$ such that
	\begin{equation}
	    \label{eq:pp}
	    {\bm p}^{(i)} = \bm U \left[ \begin{array}{c}
	\widetilde{\bm p}^{(i)}\\
	\bm 0_{KL_{\rm R} - L_{\rm D}}\\
\end{array} \right] + \bm a, 
	\end{equation}
	where $\bm p^{(i)} := \left[ \bm p_0^T, \bm p_1^T, \dots, \bm p_{K-1}^T\right]^T \in \{0,1\}^{KL_{\rm R}}$.
	
	To find such $\bm U$, $\bm a$ and $\widetilde{\bm p}^{(i)}$, we use  principal component analysis (PCA) \cite{Maaten2009}. Denote the codebook matrix by $\bm D := \left[\bm p^{(0)}, \bm p^{(1)}, \dots, \bm p^{(|\mathcal{P}|-1)} \right] \in \{0,1\}^{KL_{\rm R} \times |\mathcal{P}|}$, and the dimension reduced matrix by $ \widetilde{\bm D} := \left[\widetilde{\bm p}_0, \widetilde{\bm p}_1, \dots, \widetilde{\bm p}_{|\mathcal{P}|-1} \right]\in \mathbb{R}^{L_{\rm D} \times |\mathcal{P}|}$, respectively. Then, \eqref{eq:pp} becomes
	\begin{equation}
	    \label{eq:DD}
	    {\bm D} = \bm U \left[ \begin{array}{c}
	\widetilde{\bm D}\\
	\bm 0_{(KL_{\rm R} - L_{\rm D})\times |\mathcal{P}|}\\
\end{array} \right] + \bm a \cdot \bm 1_{|\mathcal{P}|}^T. 	\end{equation}

	Noticing that $\bm p^{(i)}$ has identical average, i.e., $\frac{1}{KL_{\rm R}} \bm 1_{KL_{\rm R}}^T \bm p ^{(i)} =  \frac{1}{K}$,   
	we first normalize columns of $\bm D$ to zero mean by
	\begin{equation}
	    \overline{\bm D} = \bm D - \frac{1}{K} \bm 1_{KL_{\rm R} \times |\mathcal{P}|}.
	\end{equation}
	
	With $\overline{\bm D}$,  we then perform SVD decomposition on $\overline{\bm D}$, i.e., 
	\begin{equation}
	    \overline{\bm D} = \bm U \bm \Sigma \bm V^T,
	\end{equation}
	where $\bm U \in \mathbb{R}^{KL_{\rm R} \times KL_{\rm R}}$ and $\bm V \in \mathbb{R}^{|\mySet{P}| \times |\mySet{P}|}$ are unitary matrices, $\bm U \bm U^T = \bm U^T \bm U = \bm I_{KL_{\rm R}}$, $\bm V \bm V^T = \bm V^T \bm V = \bm I_{|\mySet{P}|}$, and $\bm \Sigma \in \mathbb{R}^{ KL_{\rm R} \times |\mySet{P}|}$ is a diagonal matrix with $[\bm \Sigma]_{i,i}$, $i \le KL_{\rm R}$, being the singular values of $\overline{\bm D}$. 
	We estimate $L_{\rm D}$, which is often regarded as the intrinsic dimension of the original codewords, as the number of nonzero singular values, i.e., the rank of $\overline{\bm D}$, and the transpose of the new codewords are given by
	\begin{equation}
	    \widetilde{\bm D}^T = \bm V \left[  \bm \Sigma ^T\right]_{\{0,1,\dots, L_{\rm D}-1\}} .
	\end{equation}
	Let $\bm a = 1/K \bm 1_{KL_{\rm R}}$ and it can be verified that \eqref{eq:DD} holds and codewords $\widetilde{\bm D}$ preserve the distances as stated in \eqref{eq:distance_transform}. 
	

	It is worth noting that the special structure of $\bm p^{(i)}$ results in some symmetry of the distances ${\text{Dist}}_{i,j}$. To see this, we define the distance matrix $\bm R \in \mathbb{Z}^{\left|\mathcal{P} \right| \times \left|\mathcal{P} \right|}$ with entries
	\begin{equation}
	\label{eq:Rmatrix}
	\left[{\bm R}\right]_{i,j} = {\text{Dist}}_{i,j}, \quad  i,j = 0,1,\dots,\left|\mathcal{P} \right|-1.
	\end{equation}  
	The distance matrix has the following properties.
	\begin{Prop}
		\label{prop:R_diag}
		The matrix $\bm R$ is symmetric and its diagonal entries are zeros, i.e,. $\left[{\bm R}\right]_{i,j} =  \left[{\bm R}\right]_{j,i}$ and $\left[{\bm R}\right]_{i,i} = 0$,  $i,j = 0,1,\dots,\left|\mathcal{P} \right|-1$. Furthermore, each row of $\bm R$ is a permutation of the first row in $\bm R$.
	\end{Prop}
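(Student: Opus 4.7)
The first two claims are essentially by inspection: symmetry of $\bm R$ follows because $\|\bm p_k^{(i)} - \bm p_k^{(j)}\|_2^2 = \|\bm p_k^{(j)} - \bm p_k^{(i)}\|_2^2$ for each $k$, and the zero diagonal follows because each summand vanishes when $i = j$. I would dispose of these in one sentence and focus the argument on the third claim, which is the substantive one.

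For the row-permutation property the plan is to exploit the natural symmetric-group action on the codebook. Every codeword $\bm p^{(i)} = [(\bm p_0^{(i)})^T, \ldots, (\bm p_{K-1}^{(i)})^T]^T$ encodes an ordered partition of the antenna set $\{0, 1, \ldots, L_{\rm R}-1\}$ into $K$ blocks of size $L_K$. Given a permutation $\pi \in S_{L_{\rm R}}$, let $\bm \Pi \in \{0,1\}^{L_{\rm R} \times L_{\rm R}}$ be its permutation matrix, and define the action on a codeword by simultaneously permuting the entries of each block, namely $\pi \cdot \bm p^{(i)} := [(\bm \Pi \bm p_0^{(i)})^T, \ldots, (\bm \Pi \bm p_{K-1}^{(i)})^T]^T$. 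Two observations are then key: (i) this action maps $\mathcal{P}$ into itself, because $\bm \Pi \bm p_k^{(i)}$ is still a binary vector of weight $L_K$ and $\sum_k \bm \Pi \bm p_k^{(i)} = \bm \Pi \bm 1_{L_{\rm R}} = \bm 1_{L_{\rm R}}$; and (ii) the action is transitive on $\mathcal{P}$, since given any two codewords $\bm p^{(i)}$ and $\bm p^{(j)}$ one constructs $\pi$ block-by-block by mapping the support of $\bm p_k^{(i)}$ bijectively onto the support of $\bm p_k^{(j)}$ for each $k$, and patching these local bijections into a single permutation of $\{0,\ldots,L_{\rm R}-1\}$.

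The final ingredient is that $\text{Dist}_{i,j}$ is invariant under the diagonal action: for any $\pi$,
\begin{equation*}
\sum_{k=0}^{K-1}\|\bm \Pi \bm p_k^{(i)} - \bm \Pi \bm p_k^{(j)}\|_2^2 = \sum_{k=0}^{K-1}\|\bm p_k^{(i)} - \bm p_k^{(j)}\|_2^2,
\end{equation*}
because permutation matrices are orthogonal. Combining invariance with transitivity gives the claim: fix $i$ and pick $\pi_i$ with $\pi_i \cdot \bm p^{(0)} = \bm p^{(i)}$; then for every $j$ there is a unique index $\sigma_i(j)$ with $\pi_i \cdot \bm p^{(\sigma_i(j))} = \bm p^{(j)}$, and invariance yields $\text{Dist}_{i,j} = \text{Dist}_{0,\sigma_i(j)}$. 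The map $\sigma_i$ is a bijection on $\{0,\ldots,|\mathcal{P}|-1\}$ because $\pi_i$ acts as a bijection on $\mathcal{P}$, so row $i$ of $\bm R$ is obtained from row $0$ by permuting entries according to $\sigma_i^{-1}$.

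I do not expect any serious obstacle; the only step that requires a bit of care is verifying transitivity, where I must explicitly construct a single permutation of antenna indices that simultaneously maps all $K$ blocks of one ordered partition onto the corresponding blocks of the other. Beyond that, the entire argument reduces to the standard principle that distance-preserving group actions induce permutations on pairwise-distance tables.
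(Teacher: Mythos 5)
Your proof is correct and takes essentially the same route as the paper's: both rely on the transitive action of antenna permutations on the codebook together with distance invariance from orthogonality of permutation matrices, and your index $\sigma_i(j)$ (the preimage of $\bm p^{(j)}$ under $\pi_i$) is exactly the paper's codeword induced by $\bm\Sigma_i^{-1}\bm\Sigma_j$. Nothing needs to be fixed.
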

	\begin{proof}
	A proof is given in the Appendix.
	\end{proof}

	Proposition \ref{prop:R_diag} implies that, given a set $\mathcal{P}$ of different possible antenna allocation codewords,  the calculation of 
	the distance matrix $\bm R$ is far less computationally complex than evaluating the distance between each possible pair of elements of $\mathcal{P}$ in a straightforward manner.

	We take $L_{\rm R} = 4$, $K=2$ and $L_K=2$ as an example to demonstrate the dimension reduction. When $L_{\rm D}= 2, 3$, we find that the projected codebook can be visualized conveniently using ${\widetilde{\bm p}}$. To see this, recall that there are 6 possible spatial selection patterns as explained by (\ref{eq:P_example}). The original codewords, $\left\{\bm p_0^{(i)},\dots, \bm p_{K-1}^{(i)} \right\}$, $i=0,\dots,5$, have $KL_{\rm R} =8$ dimensions, and are difficult to display. The entries of the distance matrix here are given by
	\begin{equation*}
	   [\bm R]_{i,j} = \begin{cases}
	   0 & i = j, \\
	   8 & |i-j| = 3, \\
	   4 & {\rm otherwise}.
	    \end{cases}
	\end{equation*} 
	After dimensionality reduction, one obtains the following three-dimensional representation of the codewords: $\widetilde{\bm p}^{(0)}=[0,\sqrt{2},0]^T$, $\widetilde{\bm p}^{(1)}=[\sqrt{2},0,0]^T$, $\widetilde{\bm p}^{(2)}=[0,0,\sqrt{2}]^T$, and $\widetilde{\bm p}^{(3)}=-\widetilde{\bm p}^{(0)}$, $\widetilde{\bm p}^{(4)}=-\widetilde{\bm p}^{(1)}$, $\widetilde{\bm p}^{(5)}=-\widetilde{\bm p}^{(2)}$. We can verify that $\widetilde{d}\left<i,j\right> = \left[{\bm R}\right]_{i,j}$.
	The resulting three-dimennsional constellation set is depicted in Fig. \ref{fig:constellation_3D}.
	\begin{figure}[!h]
	\centering
	\includegraphics[width=2.5in]{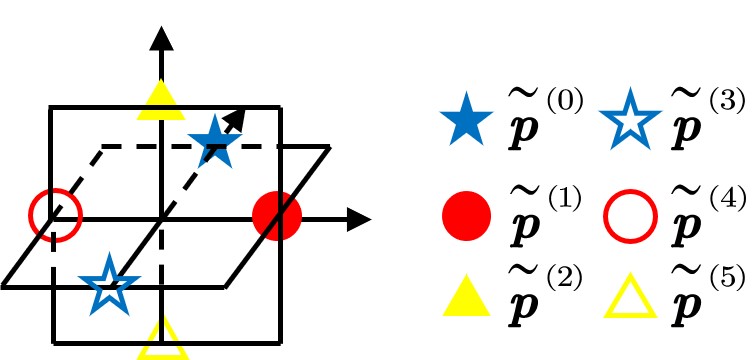}
	\caption{A constellation of dimension reduced codewords, $L_{\rm R} = 4$, $K=2$ and $L_K=2$. }
	\label{fig:constellation_3D}
	\end{figure}
	\subsubsection{Design of the Constellation Set}
	\label{subsubsec:DesignConstellation}
	After dimension reduction, the codebook design problem (\ref{eq:maxmindistance}) becomes
	\begin{equation}
	\label{eq:maxmindistance2D}
	\max_{\mathcal{P}^{\#}\subset \mathcal{P}} \min_{i,j \in \mathcal{P}^{\#}, i\neq j} \widetilde{d}\left<i,j\right> ,  \ {\rm s.t.}\ \left| \mathcal{P}^{\#} \right| = N_b.
	\end{equation}
	We propose the following sub-optimal approach to design a codebook based on \eqref{eq:maxmindistance2D}: Using clustering methods such as k-means, the codewords $\widetilde{\bm p}$ can be divided into $N_b$ classes. The codeword which is the nearest to the center point of the class is used to represent the class in the final codebook.
	Since clustering methods typically maximize the distances between classes, the proposed codebook is expected to have a large minimal distance, thus approaching the solution to \eqref{eq:maxmindistance2D}.
	
	Reducing the number of different antenna allocations affects the spatial agility and radiation pattern of the radar scheme, and thus potentially impacts the accuracy of range, Doppler or angular parameters of \ac{caesar}. Nonetheless, in the simulations study presented in Section \ref{sec:sim} it is numerically demonstrated that the radar performance degradation due to using the proposed reduced cardinality codebook is minimal. 

	\vspace{-0.2cm}
	\section{Simulations}
	\label{sec:sim}
	\vspace{-0.1cm}
	In this section we numerically evaluate the performance of \ac{majorcom}. Since the radar functionality of \ac{majorcom} is based on \ac{caesar} and is not affected by the communication subsystem, we focus here on the communication functionality of \ac{majorcom}, and refer to \cite{Huang2019b} for a detailed study of its radar performance.

	In particular, three aspects of the communication scheme are evaluated: First, in Subsection~\ref{subsec:RateSim}  the fundamental limits of the proposed system are compared to using different waveforms for communications and radar. Then,  the proposed low complexity decoders are numerically compared to the optimal ML decoder in Subsection~\ref{subsec:sim_decoder}. Finally, in Subsection~\ref{subsec:sim_code} the proposed reduced complexity codebook design approaches are evaluated along with their effect on radar performance. Throughout this study, the initial frequency is $f_c = 1.9$ GHz, the frequency spacing is $\Delta f = 10$ MHz, and the number of frequencies utilized at each pulse is $K=2$.
	
	
	\vspace{-0.2cm}
	\subsection{Achievable Rate}
	\label{subsec:RateSim}
	\vspace{-0.1cm} 
	Our achievable rate analysis quantifies the communication capabilities of \ac{majorcom}, facilitating its comparison to other configurations. As a numerical example, we consider a scenario with $4$ transmit and receive antennas, i.e., $L_{\rm R} = L_{\rm C} = 4$. The parameters of the proposed system are set to  $\theta = \frac{\pi}{4}$, $d = 10\frac{c}{f_c}$, and the number of available frequencies is $M=10$. The selection matrices used are given in \eqref{eq:P_example}. The overall number of codewords here is $|\mySet{X}| = 270$, i.e., the maximal number of bits that can be conveyed in each pulse is $\log_2 |\mySet{X}| \approx 8.1$. 
	We consider two settings for the channel matrix $\myMat{H}$: A spatial exponential decay channel, for which $\left[\myMat{H} \right]_{l_1,l_2} = e^{-\frac{1}{4}\left(|l_1-l_2|+j(l_1-l_2)\pi\right)}$; and Rayleigh fading, where the entries of $\myMat{H}$ are randomized from an i.i.d. zero-mean unit-variance proper-complex Gaussian distribution, and the achievable rate is averaged over $100$ realizations. 
	
	For each channel,
	we evaluate the lower and upper bounds on the achievable rate computed via Proposition \ref{pro:Rate1} and \eqref{eqn:Rate2}, respectively versus SNR, defined here as $1/\sigma^2$. This bound is compared to the rate achievable (in bits per channel use) when, instead of using the randomness of the radar scheme to convey bits, either the first antenna or the first two antennas are dedicated only for communications subject to a unit average power constraint, i.e., the same power as that of the radar pulse, neglecting the cross interference induced by radar and communications coexistence. {This study allows to understand when the achievable rate of \ac{majorcom}, which originates from radar transmission, is comparable to using ideal dedicated communication transmitters, which are  costly and induce mutual interference between radar and communications.}   The numerically evaluated achievable rates for the spatial decay channel and the Rayleigh fading channel are depicted in Figs. \ref{fig:MultiBand1}-\ref{fig:MultiBand2}, respectively.

	Observing Figs. \ref{fig:MultiBand1}-\ref{fig:MultiBand2}, we note that in relatively low SNRs, our proposed scheme achieves higher rates compared to using a dedicated communications antenna  element {\em without impairing the radar performance}. For Rayleigh fading channels, it is demonstrated in Fig. \ref{fig:MultiBand2} that  \ac{majorcom} is capable of outperforming a system with two dedicated communication antennas for SNRs not larger than $5$ dB. 
	As the SNR increases, using dedicated communication antennas outperforms our proposed system as more and more bits can be reliably conveyed in a single channel symbol. However, it should be emphasized that by allocating some of the antenna elements for communications, the radar performance, which is considered as the primary user in our case, is degraded. Furthermore, in order to avoid coexistence issues, which we did not consider here, the communications and radar signals should be orthogonal, e.g., use distinct bands, thus reducing the radar bandwidth. Finally, the computation of the achievable rate with dedicated antennas assumes the transmitter has \ac{csi} and does not account for the need to utilize constant modulus waveforms; it is in fact achievable using Gaussian signaling \cite[Ch. 9]{Cover2006}. Consequently, the fact that, in addition to the practical benefits of our proposed scheme and its natural coexistence with the radar transmission, it is also capable of achieving communication rates comparable to using dedicated communication antennas, illustrates the gains of \ac{majorcom}.
	\begin{figure}
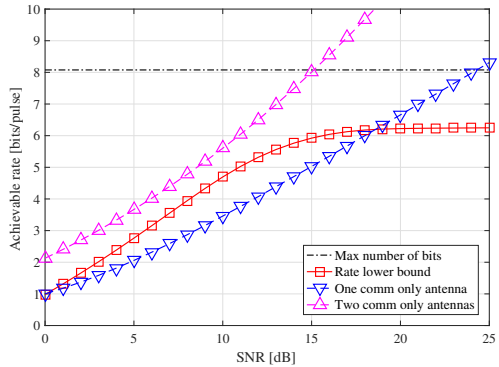

		\centering
		\includefig{RateFixed1.eps} 
		\caption{Achievable rate comparison, spatial decay channel.}
		\label{fig:MultiBand1}
		\vspace{-0.4cm}
	\end{figure}
	\begin{figure}
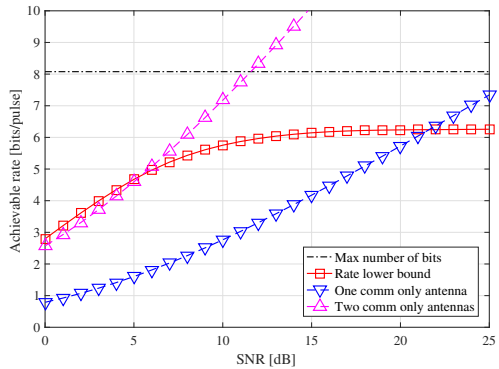

		\centering
		\includefig{RateRayleigh1.eps} 
		\caption{Achievable rate comparison, Rayleigh fading channel.}
		\label{fig:MultiBand2}
		\vspace{-0.4cm}
	\end{figure}	
	
	\vspace{-0.2cm}
	\subsection{Decoding Strategies}
	\label{subsec:sim_decoder}
	\vspace{-0.1cm}  
	We now evaluate the BER performance of the reduced complexity decoders proposed in Subsection~\ref{subsec:detection}. To that aim, we set the number of transmit and receive antennas to $L_{\rm R} = 6$ and $L_{\rm C} = 4$, respectively, and the channel matrix $\bm H$ is randomized as a zero-mean proper complex Gaussian matrix with i.i.d unit variance entries. The number of available frequencies is $M = 7$, and the beam is directed towards  $\theta = 0$. Here, $L_K = 3$ antenna elements use each frequency. The duration of the pulse is $T_p = 1$ $\mu$s, and the sampling rate used is $\frac{1}{T_s} = M\Delta f$. The number of channel outputs corresponding to each pulse is $L_T =  T_p \cdot  M\Delta f = 70$. 
	The number of bits conveyed by frequency and spatial selections are $\lfloor \log_2\left| \mho \right| \rfloor = 4$ and $\lfloor \log_2\left| \mathcal{P} \right| \rfloor = 4$, respectively.

	We compare the BER performance of the proposed decoders, including the optimal ML decoder \eqref{eq:ML}, denoted 'ML Decoder', and the low complexity decoders proposed in Subsection \ref{subsec:detection} with non-iterative (Alg. \ref{alg:noniter}) and iterative settings (Alg. \ref{alg:iter}). 
	In Alg. \ref{alg:noniter}, we apply both an ML spatial decoder \eqref{eq:ML_spatial} as well as the sub-optimal sequential method with exhaustive search for \eqref{eq:leastsquare} to recover the antenna selection vectors $\bm p_k$, denoted by  'NonIter + ML' and 'NonIter + Greedy', respectively. In Alg. \ref{alg:iter}, we test two approaches: One uses ML for both spatial and spectrum decoders, i.e. \eqref{eq:ML_spatial} and \eqref{eq:ML_frequency}, denoted by 'Iter + ML'; the other one, denoted by 'Iter + Greedy', uses greedy methods, i.e., \eqref{eq:leastsquare} and \eqref{eq:greedy_frequency} to recover the antenna selection vectors $\bm p_k$ and frequencies, respectively. In both iterative algorithms, i.e., 'Iter + ML' and 'Iter + Greedy', the maximum numbers of iterations is $i_{\max}= 10$. The initial estimate of the matrix $\widehat{\bm A}$ is computed via \eqref{eq:AhatFFT}.
	
	In Fig. \ref{fig:BER} we depict the BER  performance  of  these decoders versus SNR, $1/\sigma^2$, averaged over $10^6$ trials. 
	As expected, the computationally complex optimal ML decoder achieves the lowest BER values. Our proposed sub-optimal decoders achieve a performance which scales similarly as the ML decoder with respect to SNR. In particular, the iterative and non-iterative decoders both achieve BER of  $10^{-4}$ at SNR around {-9} dB when combined with ML estimation, while the global ML decoder achieves the same BER at -10 dB, namely, an SNR gap of 1 dB. 
	The corresponding SNR gap of the greedy sequential decoders is 3 dB. In particular, when using the greedy methods, it is observed in Fig. \ref{fig:BER} that estimation refinement using Alg. \ref{alg:iter} does not necessarily improve the accuracy over the initial estimation in Alg. \ref{alg:noniter}. These results indicate that the proposed low complexity decoders are capable of achieving performance  comparable to the  ML decoder while substantially reducing the computational burden at the communication receiver. 
	\begin{figure}
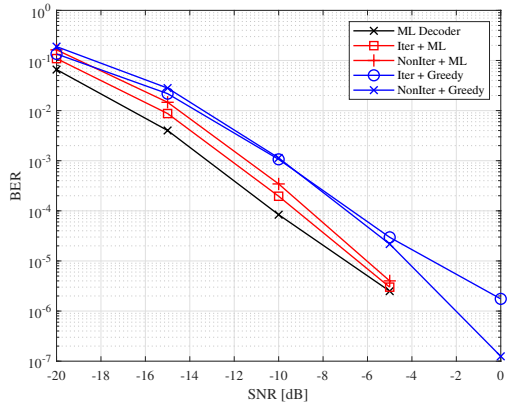

		\centering
		\includefig{BER} 
		\caption{Bit error rates of the proposed decoders.}
		\label{fig:BER}
		\vspace{-0.4cm}
	\end{figure}
	
	\vspace{-0.2cm}
	\subsection{Codebook Comparison}
	\label{subsec:sim_code}
	\vspace{-0.1cm} 
	Here, we numerically study the codebook design proposed in Subsection~\ref{subsec:Codebook}, and evaluate the impact of the designed codewords on the decoding BER as well as the radar performance. The number of antennas is set to $L_{\rm R} = 8$. Since the codebook does not affect the decoding procedure of the frequency indices, we assume that the transmitted frequencies are already recovered without errors. The remaining settings are the same as those used in the previous study.
	 
	We first evaluate the approximate design criterion minimizing \eqref{eq:dist_vec}, compared to the desired objective \eqref{eq:dist_H}. The numerically computed distances \eqref{eq:dist_vec} and \eqref{eq:dist_H}, denoted 'Dist' and 'H-Dist', respectively, are depicted  in  Fig. \ref{fig:Hdist} for  $L_{\rm C} = 4$. Observing Fig. \ref{fig:Hdist}, we note an approximate monotonic relationship between two distances, which indicates that designing the codewords to minimize \eqref{eq:dist_vec} also reduces the desired objective \eqref{eq:dist_H} proportionally. It is emphasized that when the number of receive antennas $L_{\rm C}$ increases, the monotonicity becomes more distinct. This can be explained since the channel matrix ${\bm H}$ here is Gaussian with i.i.d. entries. Such matrices are known to asymptotically preserve the norm of a projected vector \cite{Eldar2012}, thus \eqref{eq:dist_vec} and \eqref{eq:dist_H} become equivalent. To avoid cluttering, we only present the results for $L_{\rm C} = 4$.
	Comparison between iterative methods and their non-iterative counterparts indicate that iteratively updating improves accuracy of decoders, while the improvement is not significant. 
	\begin{figure}
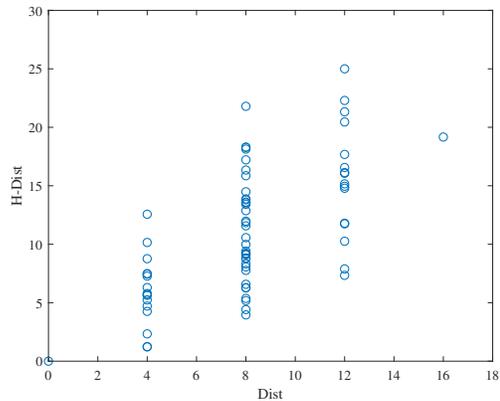

		\centering
		\includefig{HDist.eps} 
		\caption{H-Dist vs. Dist for $L_{\rm C} = 4$. Each circle represents a codeword.}
		\label{fig:Hdist}
		\vspace{-0.4cm}
	\end{figure}

	We next use the objective  \eqref{eq:dist_vec} to design a codebook. 
	After computing distance matrix $\bm R$ in \eqref{eq:Rmatrix}, we use the PCA algorithm to reduce the dimensions of the original codewords, and generate candidate codewords $\widetilde{\bm p} \in \mathbb{R}^{L_{\rm D}}$. The intrinsic dimension of the codewords $\bm p_k$ is estimated as $L_{\rm D} = 7$ here. Given $N_b = 2^1$, $2^3$, $2^5$, the k-means method is applied to cluster the candidates $\widetilde{\bm p}$ into $N_b$ classes. The candidate that is closest to the class center is selected as the final codeword. With these final codewords, we test the BER of the 'NonIter + ML' decoder \eqref{eq:ML_spatial} and depict the results in Fig. \ref{fig:Nb}. As expected, as $N_b$ grows, thus more different messages are conveyed, the overall BER performance is degraded. It is noted that while using smaller $N_b$ values decreases the BER as well as the decoding complexity, it also reduces the data rate, as less bits are conveyed in each symbol.  
	
	\begin{figure}
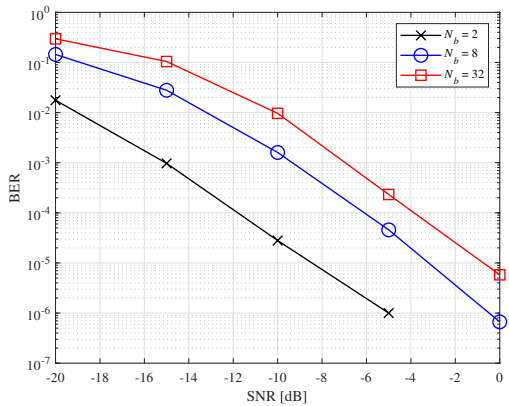

		\centering
		\includefig{Nb} 
		\caption{BERs of the ML spatial decoder \eqref{eq:ML_spatial} for different codebook sizes.}
		\label{fig:Nb}
		\vspace{-0.4cm}
	\end{figure}
	
	Finally, we evaluate the impact of the codebook design on radar performance. In particular, we consider range-Doppler reconstruction and angle estimate of targets being observed, using hit rate and the root mean squared error (RMSE) as performance metrics, respectively. A hit is proclaimed if the range-Doppler parameter of a scattering point is successfully recovered. The RMSE of the target angle is defined as  $\sqrt{{\rm E}[ (\vartheta_s - {\hat{\vartheta}_s})^2 ]}$, where $\vartheta_s$ and $\hat{\vartheta}_s$ denote true angle and estimated one for the $s$-th target, respectively. The number of radar pulses is set to $N = 32$ and is directed to  $\theta = 0$. There are $S = 4$ radar targets inside the beam $\vartheta_s \in \varTheta := \theta+\left[ -\frac{\pi}{2L_{\rm R}},\frac{\pi}{2L_{\rm R}}\right]$ with scattering intensities set to 1. The numerical performance is averaged over 100 Monte Carlo trials. In each trial, the range-Doppler parameters of every target are randomly chosen from the grid points (grid points are explained in \cite[Sec. IV]{Huang2019b}), and the angles are randomly set within the beam $\varTheta$. We define the SNR of the radar returns as $1/\kappa^2$, where $\kappa^2$ is the variance of the additive i.i.d. zero-mean proper-complex Gaussian noise; see \cite[Sec. VII]{Huang2019b}. The algorithm used for radar signal processing is detailed in \cite[Algorithm 1]{Huang2019b}, where Lasso is applied to solve the compressed sensing problem. 
	The resultant range-Doppler reconstruction hit rates and angle estimation performance with  the aforementioned codebooks are depicted in Figs. \ref{fig:full_noisy_hitrate} and \ref{fig:full_noisy_angleerror}, respectively. Observing these two figures, we note that decreasing the codebook size has only a minimal effect on the range-Doppler and angle estimates of radar targets. This indicates  that the proposed codebook reduction method can be used to facilitate the decoding complexity by limiting the number of codewords at the cost 
	of $\log_2 N_b$  less bits  conveyed in each symbol with hardly any impact on estimation performance of radar targets.
	
	\begin{figure}
		\centering
		\includegraphics[width=2.5in]{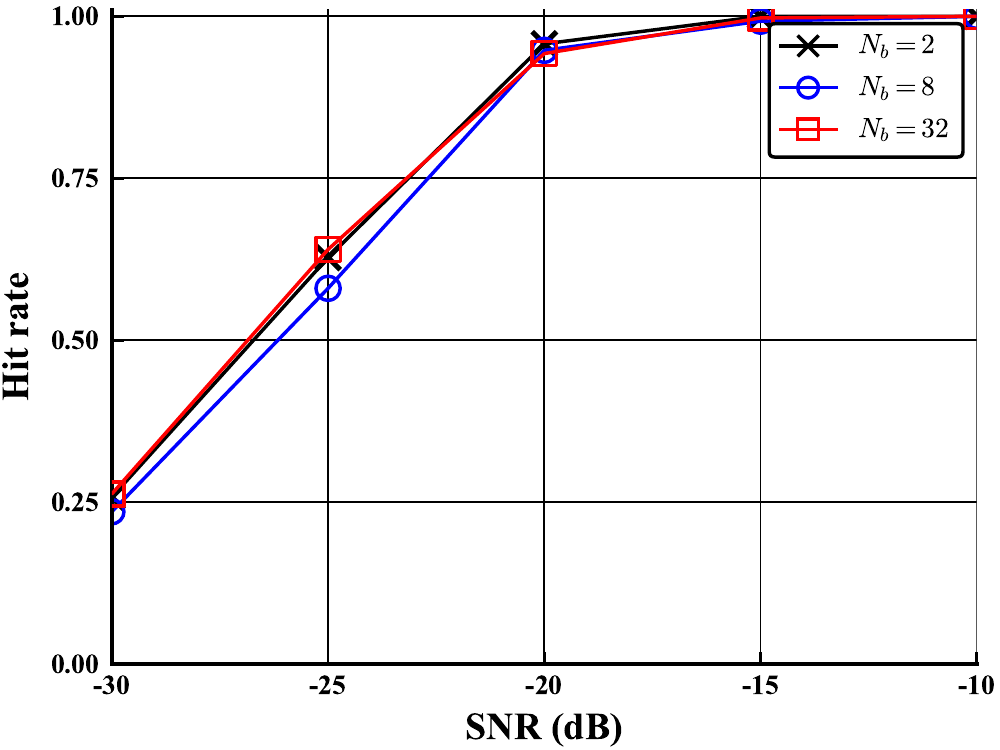}
		\vspace{-0.2cm}
		\caption{Range-Doppler recovery  versus SNR.}
		\label{fig:full_noisy_hitrate}
	\end{figure}
	\begin{figure}
		\centering
		\includegraphics[width=2.5in]{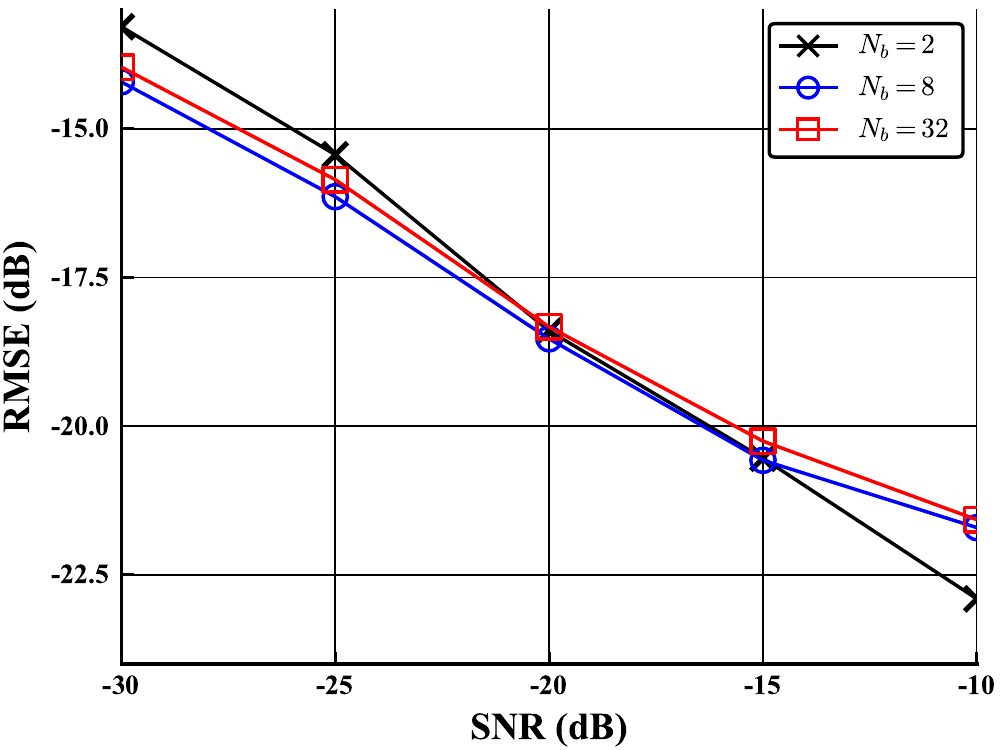}
		\vspace{-0.2cm}
		\caption{Angle estimation performance versus SNR.}
		\label{fig:full_noisy_angleerror}
	\end{figure}	

	\vspace{-0.2cm}
	\section{Conclusions}
	\label{sec:conclusion}
	\vspace{-0.1cm}
	In this paper, we proposed \ac{majorcom} - a DFRC system which combines frequency and spatial agility.   \ac{majorcom} exploits an inherent randomness in the radar scheme to convey information to a remote receiver using index modulation. In particular, the ability of \ac{majorcom} to convey digital messages is a natural byproduct of its radar scheme, and thus does not induce any coexistence and mutual interference issues, unlike most previously proposed DFRC methods.   The achievable rate of the proposed communications scheme was shown to be comparable to that obtained with dedicated communication waveforms without interfering with the radar functionality. To handle the increased decoding complexity of this scheme, a low complexity receiver and codebook design approach were proposed.  Simulation results demonstrate that \ac{majorcom} exhibits excellent communication performance, and that  the proposed low complexity techniques allow to efficiently balance computational burden and communication reliability. 
	
	\vspace{-0.2cm}
	\appendix[Proof of Proposition \ref{prop:R_diag}]
	\label{sec:app}
	\vspace{-0.1cm}
	The symmetry and zero main diagonal of $\bm R$ follow directly from its definition  \eqref{eq:Rmatrix}. We thus only prove that each row of $\bm R$ is a permutation of its first row.
	
	For each codeword $i$, there exists an $L_{\rm R}\times L_{\rm R}$ permutation matrix $\bm\Sigma_i$ 
	such that $\bm\Sigma_i\bm p_k^{(0)}=\bm p_k^{(i)}$, $k=0,\ldots,K-1$. 
	This permutation matrix is not unique: two permutations $\bm\Sigma$, $\tilde{\bm\Sigma}$ 
	induce the same codeword 
	(i.e., $\bm\Sigma\bm p_k^{(0)}=\tilde{\bm\Sigma}\bm p_k^{(0)}$ for all $k$) 
	if and only if  $\bm\Sigma^{-1}\tilde{\bm\Sigma} \bm p_k^{(0)} = \bm p_k^{(0)}$, for all $k$. 
	For convenience, denote by $\mathcal{G}$ the set of all permutation matrices that fix $\bm p_k^{(0)}$ for all $k$. 
	Choose for each $i$ a permutation matrix $\bm\Sigma_i$ inducing codeword $i$. 
	The $i$-th row of $\bm R$ consists of elements
	\begin{equation}
	[\bm R]_{i,j}=\sum_{k=0}^{K-1}\left\|\bm\Sigma_i\bm p_k^{(0)}-\bm\Sigma_j\bm p_k^{(0)}\right\|^2.
	\end{equation}
	Since permutation matrices are orthogonal, this is equal to 
	\begin{equation}
	\sum_{k=0}^{K-1}\left\|{\bm p}_k^{(0)}-\bm\Sigma_i^{-1}\bm\Sigma_j\bm p_k^{(0)}\right\|^2.
	\end{equation}
	Denote by $\text{code}^i_j$ the codeword induced by $\bm\Sigma_i^{-1}\bm\Sigma_j$. 
	Then, 
	\vspace{-0.1cm}
	\begin{equation}\label{eqn:row_code_j}
	[\bm R]_{i,j}=\sum_{k=0}^{K-1}\left\|\bm p_k^{(0)}-\bm p_k^{(\text{code}^i_j)}\right\|^2.
	\vspace{-0.1cm}
	\end{equation}
	For $j\ne j'$, we note that $\bm\Sigma_i^{-1}\bm\Sigma_j$ and $\bm\Sigma_i^{-1}\bm\Sigma_{j'}$ 
	induce different codewords since
	\vspace{-0.1cm}
	\begin{equation}
	(\bm\Sigma_i^{-1}\bm\Sigma_j)^{-1}\bm\Sigma_i^{-1}\bm\Sigma_{j'}=\bm\Sigma_j^{-1}\bm\Sigma_{j'}\not\in \mathcal{G}.
	\vspace{-0.1cm}
	\end{equation}
	Thus, as $j$ runs through all the codewords, both $\text{code}^0_j$ and $\text{code}^i_j$ run through all the codewords. 
	By (\ref{eqn:row_code_j}) this implies that the $i$-th row of $\bm R$ is a permutation of the first row of $\bm R$. 
	$\qed$
	
	%
	

	\ifCLASSOPTIONcaptionsoff
	\newpage
	\fi

	
	
	%
	\vspace{-0.2cm}
	\bibliographystyle{IEEEtran}
	\bibliography{ref2}

	
	

\end{document}